\newcommand{\N}{\mathcal{N}}
\newcommand{\E}{\mathbb{E}}
\newcommand{\cut}{\text{CUT}}
\newcommand{\bigO}{\mathcal{O}}
\newcommand{\wgt}{w}
\newcommand{\vb}{\mathbf{v}}
\newcommand*\samethanks[1][\value{footnote}]{\footnotemark[#1]}
\newcommand{\Sout}{S_{\text{out}}}
\newcommand{\Umax}{U_{\text{max}}}
\newcommand{\Umin}{U_{\text{min}}}
\newcommand{\game}{\mathcal{G}}
\newcommand{\R}{\mathbb{R}}
\newcommand{\ROUND}{\text{ROUND}}
\newcommand{\SDP}{\text{SDP}}
\newcommand{\GM}{\text{GLOBAL}}
\newcommand{\LM}{\text{LOCAL}}
\newcommand{\xor}{\text{XOR}}
\newtheorem{theorem}{Theorem}[section]
\newtheorem{lemma}[theorem]{Lemma}
\newtheorem{claim}[theorem]{Claim}
\title{\bf Computing better approximate pure Nash equilibria\\ in cut games
via semidefinite programming}
\author{Ioannis Caragiannis\thanks{Department of Computer Science, Aarhus University, {\AA}bogade 34, 8200 Aarhus N, Denmark. Email:~\url{{iannis,zhile}@cs.au.dk}} \and Zhile Jiang\samethanks}
\date{}
\begin{document}

\maketitle

\begin{abstract}
Cut games are among the most fundamental strategic games in algorithmic game theory. It is well-known that computing an exact pure Nash equilibrium in these games is PLS-hard, so research has focused on computing approximate equilibria. We present a polynomial-time algorithm that computes $2.7371$-approximate pure Nash equilibria in cut games. This is the first improvement to the previously best-known bound of $3$, due to the work of Bhalgat, Chakraborty, and Khanna from EC 2010. Our algorithm is based on a general recipe proposed by Caragiannis, Fanelli, Gravin, and Skopalik from FOCS 2011 and applied on several potential games since then. The first novelty of our work is the introduction of a phase that can identify subsets of players who can simultaneously improve their utilities considerably. This is done via semidefinite programming and randomized rounding. In particular, a negative objective value to the semidefinite program guarantees that no such considerable improvement is possible for a given set of players. Otherwise, randomized rounding of the SDP solution is used to identify a set of players who can simultaneously improve their strategies considerably and allows the algorithm to make progress. The way rounding is performed is another important novelty of our work. Here, we exploit an idea that dates back to a paper by Feige and Goemans from 1995, but we take it to an extreme that has not been analyzed before. \end{abstract}

\section{Introduction}
Understanding the computational aspects of equilibria is a key goal of algorithmic game theory. In this direction, we consider the fundamental class of cut games. In a cut game, each node of an edge-weighted graph is controlled by a distinct player. The players aim to build a cut of the graph in a non-cooperative way. Each player has two strategies, to put the node she controls in one of the two sides of the cut, and aims to maximize her utility, defined as the total weight of edges in the cut that are incident to the node the player controls. A pure Nash equilibrium (or, simply, an equilibrium) is a state of the game in which no player has any incentive to unilaterally change her strategy in order to improve her utility.

Cut games are potential games. The total weight of the edges in the cut is a potential function, with the following remarkable property: for every two states that differ in the strategy of a single player, the difference in the value of the potential function is equal to the difference in the utility of the deviating player. Thus, a sequence of best-response moves, consisting of players who change their strategies and improve their utilities, is guaranteed to lead to a state that locally maximizes the potential function and is, thus, an equilibrium. So, finding an equilibrium is equivalent to computing a locally-maximum cut in a graph, i.e., a cut that cannot be improved by moving a single node from one side of the cut to the other. 

Unfortunately, computing a locally-maximum cut in a graph is a PLS-hard problem~\cite{SY91}. In light of this, approximate pure Nash equilibria seem to be a reasonable compromise for cut games; in general, a $\rho$-approximate pure Nash equilibrium is a state of a strategic game in which no player deviation can increase her utility by a factor of more than $\rho$. Among other results, Bhalgat et al.~\cite{BCK10} present a polynomial-time algorithm for computing a $3$-approximate equilibrium in cut games.

At first glance, two algorithmic ideas seem relevant for tackling the challenge of computing approximate pure Nash equilibria. The first one is to follow a sequence of deviations by players who improve their utility by a factor of more than $\rho$. The existence of a potential function guarantees that this process will eventually converge to a $\rho$-approximate equilibrium. Unfortunately, such sequence can be exponentially long, as Bhalgat et al.~\cite{BCK10} have shown specifically for cut games (and for small approximation factors). The second one is to exploit an approximation algorithm for the problem of maximizing the potential function. For cut games, this could involve the celebrated algorithm of Goemans and Williamson~\cite{GW95} for MAX-CUT or, more generally, excellent approximations of local maxima using the techniques of Orlin et al.~\cite{OPS04}. Unfortunately, approximations of the potential function and approximate equilibria are unrelated notions. So, the algorithm in~\cite{BCK10} exploits the structure of cut games to define restricted subgames in which sequences of player deviations are applied separately. This approach leads to $3$-approximate equilibria in cut games and is also applicable to the broader class of constraint satisfaction games.

\paragraph{Our results and techniques.} 
In this paper, we present the first improvement to the result of Bhalgat et al.~\cite{BCK10}, showing how to compute efficiently a $2.7371$-approximate equilibrium in a cut game. Our result follows a general algorithmic recipe proposed by Caragiannis et al.~\cite{CFGS11} for congestion games. Applications of this recipe to constraint satisfaction games~\cite{CFG17} has improved many of the approximation bounds presented in~\cite{BCK10} but not the one for cut games (even though it has rediscovered a $3$-approximation). Unlike~\cite{BCK10}, the algorithms in~\cite{CFGS11,CFG17} identify polynomially-long sequences of coordinated player deviations that eventually lead to an approximate equilibrium. Our improvement is possible by introducing a {\em global move step} in which a set of players change their strategies simultaneously. Such a set of players is identified by solving a semidefinite program and rounding its solution. To the best of our knowledge, this is the first application of semidefinite programming to the computation of approximate pure Nash equilibria. Essentially, our paper blends the two algorithmic approaches mentioned in the previous paragraph for the first time.

The general structure of the algorithm in~\cite{CFG17} is as follows. Players are classified into blocks, so that players within the same block are polynomially-related in terms of their maximum utility. Then, a set of phases is executed. In each phase, the players in two consecutive blocks are allowed to move. The players in the (heavy) block of higher maximum utility play (roughly) $3$-moves, i.e., they change their strategy whenever they can improve their utility by a factor of more than $3$. The players in the (light) block of lower maximum utility play $(1+\varepsilon)$-moves (for some tiny positive $\varepsilon$). At the end of the phase, the strategies of the players in the heavy block are irrevocably decided. The crucial property of the algorithm is that the light players of a phase are the heavy ones in the next one. Due to the moves of the light players, the state reached at the end of a phase has {\em low stretch} (of at most $3$) for the light players, meaning that there is no subset of them who could simultaneously change their strategy and improve their contribution to the potential function by a factor of more than $3$. This guarantees that the number of $3$-moves by the heavy players in the next phase will be polynomially-bounded and, furthermore, the effect that these moves have to players whose strategies were irrevocably decided in previous phases is minimal. So, all players stay close to a $3$-approximate equilibrium at the end of the whole process.

Our algorithm follows the same general structure, even though it coordinates more carefully the moves of the players in different blocks within a phase. The improvement comes from a different implementation of the moves of the light players within a phase. Here, we focus explicitly on the problem of detecting whether there exists a set of light players who can improve their contribution to the potential by a factor of $\rho$ by changing their strategies simultaneously. Essentially, the algorithm of \cite{CFG17} solves this problem for $\rho=3$ using local search (the analog of best-response moves by the light players). Instead, we obtain the better approximation of $2.7371$ by formulating this problem as a semidefinite program. Non-positive objective values of the semidefinite program guarantee that the current cut has low stretch for the light players. Positive objective values guarantee that by rounding the SDP solution appropriately, we obtain a global move which improves the potential considerably and makes progress, similar to the progress an $(1+\varepsilon)$-move by a single light player makes in the algorithm of \cite{CFG17}.

Rounding the SDP solution is very challenging. The SDP objective has a mixture of XOR-type terms, requiring that the two vectors corresponding to the endpoints of an edge that is not in the current cut are opposite, and OR-type terms requiring that one of the endpoints of an edge in the current cut coincides with a special vector the SDP is using. Even though the XOR-type terms have non-negative coefficients and standard hyperplane rounding could yield excellent approximations for the particular terms (like in~\cite{GW95}), the OR-type terms have negative coefficients, making standard hyperplane rounding disastrous for them. So, we instead use an idea that originates from Feige and Goemans~\cite{FG95}, which has inspired much follow-up work on approximating MAX2-SAT and related problems, but we consider it in an extreme that has not been considered before. In particular, the vectors in the SDP solution are first rotated in the $2$-dimensional plane they define with the special vector and, then, hyperplane rounding is performed. We use the rotation function $f(\theta)=\frac{\pi}{2}(1-\cos{\theta})$, meaning that a vector at an angle of $\theta$ from the special vector is relocated at an angle $f(\theta)$ from it. This gives a rather poor $1/2$-approximation to the XOR-type terms but approximates well the OR-type constraints.

The approximation ratio we obtain is the solution of a quadratic equation; the exact value is $\frac{1+2\sqrt{13}}{3}\approx 2.7371$. This is extremely rare for the approximation factors obtained from SDPs in combinatorial optimization problems. To prove the properties of rounding, we need to prove two inequalities for triplets of $3$-dimensional vectors, which turn out to depend on two variables only: the angles the special vector forms with the two other vectors involved in a term of the SDP objective. As it is often the case with the analysis of SDP-based approximation algorithms (e.g., in~\cite{LLZ02}), we have verified one of these inequalities by extensive numerical computations. These imply that the worst-case bound is obtained for a restriction of the parameters to more tractable cases, for which we present a formal analysis.

\paragraph{Related work.}
Even though research in algorithmic game theory gained popularity after the seminal papers by Koutsoupias and Papadimitriou~\cite{KP09}, Nisan and Ronan~\cite{NR01}, and Roughgarden and Tardos~\cite{RT02} (the conference versions appeared in 1999, 1999, and 2000, respectively), research in computational aspects of pure Nash equilibria had started earlier, with the introduction of the class PLS ---standing for polynomial local search--- by Johnson, Papadimitriou, and Yannakakis~\cite{JPY88} and subsequent (mainly negative) results on computing locally-optimum solutions for combinatorial optimization problems. Mostly related to the topic of the current paper is the paper by Schaeffer and Yannakakis~\cite{SY91}. Expressed in game-theoretic terms, they proved that computing a pure Nash equilibrium in a cut game is a PLS-hard problem.

The connection of cut games and local search should not come as a surprise. Cut games are potential games~\cite{MS96}. They admit a potential function defined over the states of the game so that for any two states differing in the strategy of a single player, the difference in the potential values between the two states and the difference in the utility of the deviating player has the same sign. The class of potential games also includes congestion games~\cite{FPT04,MS96,R73}, network design games~\cite{ADKTWR08}, constraint satisfaction games~\cite{BCK10,CFG17}, market sharing games~\cite{GLMT06}, and more. Most of these games have the additional property that the two differences mentioned above not only have the same sign but they are actually equal. Such potential functions are usually called exact~\cite{FPT04}. For cut games, the total weight of edges that cross the cut is an exact potential function.

Early attempts on computing equilibria in potential games have aimed to follow a best-response dynamics, i.e., a sequence of player moves, in which the utility of the deviating player increases by at least a $1+\varepsilon$ factor. This has been proved useful in a few cases ---e.g., in restricted cases of congestion games~\cite{ARV08,BCK09,CS11}--- to compute $(1+\varepsilon)$-approximate equilibria. Such approaches work nicely if the players are similar, in the sense that their moves result in comparable utility increases. Simple arguments, which relate the increase in the potential function with the minimum increase in the utility of a deviating player can be used to bound the running time (i.e., the length of the best-response dynamics to the approximate equilibrium) in terms of the game size and $1/\varepsilon$. However, players rarely have such similarities. In a cut game, we may have players controlling nodes that have incident edges of high weight and others which have only incident edges of negligible weight. Then, such approaches cannot compute approximate equilibria in polynomial time, even though they can be used to compute states of high social welfare, e.g., see~\cite{AAEMS08,BBM09,BP16,CMS12}.

The most successful recipe for computing approximate equilibria in potential games has been proposed by Caragiannis et al.~\cite{CFGS11}. Their approach has yielded $2$-approximate equilibria for linear congestion games, and $d^{\bigO(d)}$-approximate equilibria for congestion games with polynomial latency functions of degree $d$. These results have been improved in~\cite{FGS14,VS20}. More importantly, the recipe of~\cite{CFGS11} has been applied to other potential games such as constraint satisfaction~\cite{CFG17} and cost sharing games~\cite{GKK20}, as well as to non-potential games such as weighted congestion games in~\cite{CF21,CFGS15,GNS22}. These last results have been possible by either approximating the original game by a potential game in which the algorithm is applied, or by introducing an approximate potential function to keep track of best-response moves in the original games. In all applications of the recipe of~\cite{CFGS11}, the algorithm identifies a sequence of deviations by the players, either in the original game or in its approximation. The current paper is the first one to extend the recipe with global moves (of simultaneous player deviations).

Our global moves are implemented via semidefinite programming and randomized rounding. Starting with the celebrated paper of Goemans and Williamson~\cite{GW95}, these tools have been proved very useful in approximating combinatorial optimization problems such as MAX-CUT (the problem of computing a cut of maximum total weight) and MAX-2SAT (the problem of computing a Boolean assignment that maximizes the total weight of satisfied clauses in a formula in conjunctive normal form), among many others. The classical semidefinite program for MAX-CUT uses a unit vector $v_j$ for each graph node $j$, to represent the side of the cut in which to node is put. The objective is just the sum of the terms $\frac{1}{2}(1- v_j\cdot v_k)$ multiplied by the weight of edge $(j,k)$ over all edges in the input graph. Notice that the quantity $\frac{1}{2}(1- v_j\cdot v_k)$ is a XOR-type term, taking the value $1$ if the two vectors are opposite and $0$ if the two vectors coincide (corresponding to nodes in different sides or at the same side of the cut, respectively). The solution of the SDP, which may inevitably consist of vectors in many different directions, is rounded by picking a random hyperplane crossing the vectors' origin, and putting the nodes at the left or the right of the cut, according to side of the hyperplane the corresponding unit vectors are lying. In this way, Goemans and Williamson~\cite{GW95} improved the trivial $1/2$-approximation for MAX-CUT to $0.8785$. 

Modeling OR-type terms (like those required by MAX-2SAT) requires an extra variable, a reference (or special) vector $\widehat{v}$. Then, using the vector $v_j$ as (an approximation to) the value of a Boolean variable, the OR-type term $\frac{1}{4}\cdot (3+v_j\cdot \widehat{v}+v_k\cdot \widehat{v}-v_j\cdot v_k)$ takes the value $1$ if at least one of the vectors $v_j$ and $v_k$ coincide with the reference vector $\widehat{v}$. Feige and Goemans~\cite{FG95} considered improved the approximation bound of~\cite{GW95} for MAX-2SAT by exploiting the existence of the reference vector $\widehat{v}$ to rotate all vectors appropriately before rounding them using a random hyperplane. Rotation relocates each vector $v_j$ forming an angle $\theta$ with the reference vector at a new position in the $2$-dimensional plane defined by $v_j$ and $\widehat{v}$, at angle $f(\theta)$ from $\widehat{v}$. The rotation function $f$ they use is a linear combination between the no-rotation function $f(\theta)=\theta$ and the rotation function $f(\theta)=\frac{\pi}{2}\cdot (1-\cos{\theta})$ that we also consider here. An extensive discussion of this approach with additional details is given in~\cite{Z00}. Lewin et al.~\cite{LLZ02} present the best known bound of $0.9401$ for MAX-2SAT; in their paper, they discuss several alternatives for rounding the solution of SDPs for MAX-2SAT. Finally, we remark that even though SDPs and rounding have been used to handle negative coefficients successfully, e.g., in approximating the cut-norm of a matrix~\cite{AN06} or in maximizing quadratic programs~\cite{CW04}, those formulation are very far from ours and the corresponding techniques do not seem to be applicable in our case.

\paragraph{Roadmap.} The rest of the paper is structured as follows. We begin with preliminary definitions and notation in Section~\ref{sec:prelim}. Our algorithm is presented in Section~\ref{sec:alg}, which concludes with the statement of our main result (Theorem~\ref{thm:main}). Then, Section~\ref{sec:global-moves} presents the properties of the global-moves routine of our algorithm. We put everything together and prove Theorem~\ref{thm:main} in Section~\ref{sec:proof}, while the proof of Lemma~\ref{lem:xor}, which states the properties of randomized rounding is deferred to Section~\ref{sec:xor-proofs}. We conclude in Section~\ref{sec:open}.

\section{Preliminaries}\label{sec:prelim}
A {\em cut game} is defined by an edge-weighted complete $n$-node graph $G=(V,E)$. Each edge $e$ of the graph has a non-negative weight $w_e$. We overload this notation and write $w_{jk}$ to denote the weight of the edge $(j,k)$ and $\wgt(A)=\sum_{e\in A}{w_e}$ to denote the total weight of the edges in set $A$. There are $n$ players, each corresponding to a distinct node of $G$. The players aim to collectively but non-cooperatively build a {\em cut} of the graph, i.e., a partition of its nodes into two sets. Player $j\in \N$ has two different strategies: putting her node $i$ at the {\em left} or at the {\em right} side of the cut. A {\em state} of the game is represented by an $n$-vector $S=(s_1,s_2, ..., s_n)$, indicating that player $j$ uses strategy $s_j$. Given a state $S$, we denote by $\cut(S)$ the set of edges whose endpoints correspond to players that select different strategies in $S$. For a subset of players $R\subseteq \N$, we denote by $\cut_R(S)$ the subset of $\cut(S)$ that consists of edges with at least one endpoint corresponding to a player in $R$. With some abuse of notation, we simplify $\cut_{\{j\}}(S)$ to $\cut_j(S)$. The {\em utility} of player $j$ is the total weight of the edges incident to her node, i.e., $u_j(S)=\wgt(\cut_j(S))$. We also denote by $E_R$ the set of edges that are incident to at least one player in $R$ and simplify $E_{\{j\}}$ to $E_j$. We denote by $U_j$ the {\em maximum utility} that player $j$ may have, i.e., $U_j=\wgt(E_j)$.

We use a standard game-theoretic notation. Given a state $S=(s_1, s_2, ..., s_n)$ and a player $j\in \N$, we denote by $(S_{-j},s'_j)$ the state obtained by $S$ when player $j$ unilaterally changes her strategy from $s_j$ to its complement $s'_j$. This is an {\em improvement move} (or, simply, a {\em move}) for player $j$ if her utility increases, i.e., if $u_j(S_{-j},s'_j)>u_j(S)$. We call it a {\em $\rho$-move} when the utility of player $j$ increases by more than a factor of $\rho>1$, i.e., $u_j(S_{-j},s'_j)>\rho\cdot u_j(S)$. A state $S$ is a {\em pure Nash equilibrium} (or, simply, an {\em equilibrium}) if no player has a move to make. Similarly, $S$ is a {\em $\rho$-approximate (pure Nash) equilibrium} if no player has any $\rho$-move to make.

Given a state $S$, we denote by $\Phi(S)$ the total weight of the edges in $\cut(S)$, i.e., $\Phi(S)=\wgt(\cut(S))$. The function $\Phi$ is a {\em potential function}, having the remarkable property that for every two states $S$ and $(S_{-j},s'_j)$ that differ only in the strategy of player $j$, the difference in the potential function is equal to the difference of the utility of player $j$, i.e., $\Phi(S)-\Phi(S_{-j},s'_j)=u_j(S)-u_j(S_{-j},s'_j)$.

Following~\cite{CFG17}, we will often consider sequences of moves in which only players from a certain subset $R\subseteq \N$ are moving. These can be thought of as moves in a {\em subgame} defined on graph $G$ among the players in $R$ only, in which the nodes that are not controlled by players in $R$ are {\em frozen} to a fixed placement in one of the sides of the cut. We represent states of a subgame in the same way we do for the original game using $n$-vectors of strategies, including both the strategies of the players in $R$ and the frozen strategies of the players in $\N\setminus R$. 

The next two claims follow easily by our definitions and are used repeatedly in our proofs. Claim~\ref{claim:subgame-potential} shows that the subgames of a cut game are also potential games and Claim~\ref{claim:potential-increase-by-p-move} gives basic facts about the utility of players.

\begin{claim}\label{claim:subgame-potential}
Consider a cut game $\game$ with a set $\N$ of players. For every subgame of $\game$ defined among a subset $R\subseteq \N$ of players, the function $\Phi_R$, defined as $\Phi_R(S)=\wgt(\cut_R(S))$ for each state $S$ of the subgame, is an exact potential function.
\end{claim}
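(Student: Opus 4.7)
The plan is to fix any two states $S$ and $S'=(S_{-j},s'_j)$ of the subgame that differ only in the strategy of a single player $j\in R$, and verify the exact-potential identity $\Phi_R(S')-\Phi_R(S)=u_j(S')-u_j(S)$ directly from the definitions.

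The first step is to isolate the edges that can possibly change cut-status between $S$ and $S'$. Since every player other than $j$ (including the frozen players in $\N\setminus R$) keeps the same strategy, an edge $e=(a,b)\in E$ contributes differently to $\cut(S)$ and $\cut(S')$ only if one of its endpoints is $j$, i.e., only if $e\in E_j$. Thus the symmetric difference of $\cut(S)$ and $\cut(S')$ is contained in $E_j$.

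The second step is the key observation that $E_j\subseteq E_R$: because $j\in R$, every edge incident to $j$ has at least one endpoint in $R$ and therefore lies in the edge set used to define $\Phi_R$. Here it is important that $E_R$ counts edges with \emph{at least one} endpoint in $R$; edges from $j$ to a frozen node in $\N\setminus R$ would otherwise be missed. Combining the two observations, only edges of $E_j$ matter to the change $\Phi_R(S')-\Phi_R(S)$, and those same edges define $u_j$, giving
\begin{align*}
\Phi_R(S')-\Phi_R(S) \;=\; \sum_{e\in E_j}w_e\bigl(\mathbf{1}[e\in \cut(S')]-\mathbf{1}[e\in \cut(S)]\bigr) \;=\; \wgt(\cut_j(S'))-\wgt(\cut_j(S)),
\end{align*}
which is exactly $u_j(S')-u_j(S)$.

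There is no real obstacle beyond getting the bookkeeping right; the only subtle point is the inclusion $E_j\subseteq E_R$, which relies on the ``at least one endpoint'' convention in the definition of $\cut_R$. Everything else follows from the fact that a unilateral deviation of $j$ affects only edges incident to $j$, and these are counted identically in $\Phi_R$ and in $u_j$.
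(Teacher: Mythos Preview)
Your proposal is correct and follows essentially the same approach as the paper: both verify the exact-potential identity by noting that a unilateral deviation of $j\in R$ only affects edges in $E_j$, and that $\wgt(\cut_R(S))-\wgt(\cut_R(S'))$ therefore reduces to $\wgt(\cut_j(S))-\wgt(\cut_j(S'))=u_j(S)-u_j(S')$. Your write-up is simply more explicit about the inclusion $E_j\subseteq E_R$ that makes this reduction work.
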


\begin{claim}\label{claim:potential-increase-by-p-move}
Consider a cut game $\game$ with a set $\N$ of players, and let $j\in \N$ be a player and $S$ a state of $\game$. If player $j$ performs a $p$-move, her utility and the potential $\Phi_R$ for every subgame involving her (i.e., $j\in R\subseteq \N$) increase by more than $\frac{p-1}{p+1}\cdot U_j$. If player $j$ does not have any $p$-move to make at state $S$, then $u_j(S)\geq \frac{1}{p+1}\cdot U_j$.
\end{claim}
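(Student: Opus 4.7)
The plan is to exploit a very simple dichotomy that is special to cut games: every edge incident to player $j$ is either in $\cut(S)$ or not, and flipping $j$'s strategy swaps these two classes of edges. In symbols, this gives the identity
\[
u_j(S) + u_j(S_{-j},s'_j) = U_j.
\]
This one-line observation is really the whole engine of the proof, and both parts of the claim are then just rearrangements.

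First I would dispatch the second part. The assumption that $j$ has no $p$-move at $S$ is $u_j(S_{-j},s'_j) \leq p\cdot u_j(S)$; combining with the identity yields $U_j - u_j(S) \leq p\cdot u_j(S)$, which rearranges to $u_j(S) \geq U_j/(p+1)$.

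For the first part, the $p$-move hypothesis $u_j(S_{-j},s'_j) > p\cdot u_j(S)$ together with the identity gives the strict bound $u_j(S) < U_j/(p+1)$. Hence the utility gain equals
\[
u_j(S_{-j},s'_j) - u_j(S) = U_j - 2u_j(S) > U_j - \frac{2U_j}{p+1} = \frac{p-1}{p+1}\cdot U_j,
\]
which is what is claimed for $u_j$.

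The only step that needs a moment of care is lifting this gain to the subgame potentials $\Phi_R$ with $j\in R$, and this is where I expect the mild obstacle to lie. By Claim~\ref{claim:subgame-potential}, $\Phi_R$ is an exact potential for the subgame on $R$, so its change under $j$'s move equals the change in $j$'s utility in the subgame. It remains to observe that this utility is the same quantity we already analyzed: the edges whose cut-status flips when $j$ moves are exactly the edges incident to $j$, each of which has $j\in R$ as an endpoint, so each such edge belongs to $E_R$ and counts towards $\cut_R$ precisely when it counts towards $\cut$. Thus $u_j$ in the subgame coincides with $u_j$ in the original game (with the players of $\N\setminus R$ treated as frozen), and the bound $\frac{p-1}{p+1}\cdot U_j$ transfers verbatim to $\Phi_R$. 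No computation is subtle; the only thing to watch is the distinction between the subgame utility and the original utility, which collapses here because all relevant edges touch $j$.
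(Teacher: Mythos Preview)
Your proof is correct and takes essentially the same approach as the paper: both hinge on the identity $u_j(S)+u_j(S_{-j},s'_j)=U_j$ and then invoke Claim~\ref{claim:subgame-potential} to transfer the utility gain to $\Phi_R$. The only difference is cosmetic algebra---the paper rewrites $(p-1)\cdot u_j(S)$ as $\tfrac{p-1}{2}U_j - \tfrac{p-1}{2}(u_j(S_{-j},s'_j)-u_j(S))$ and rearranges, whereas you first bound $u_j(S)<U_j/(p+1)$ and then substitute into $U_j-2u_j(S)$; both routes are equivalent.
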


Given two states $S$ and $T$ of game $\game$ and a possibly empty subset of players $R\subseteq \N$, we denote by $(S_{-R},T_R)$ the state in which each player of set $R$ uses her strategy in state $T$ and each player in set $\N\setminus R$ uses her strategy in state $S$. In particular, we denote by $(S_{-R},S'_R)$ the state in which each player of set $R$ uses the complement of her strategy in state $S$ and each player of set $\N\setminus R$ uses her strategy in $S$. We refer to the ratio $\frac{\Phi_R(S_{-R},S'_R)}{\Phi_R(S)}$ as the {\em stretch} of state $S$ for the subgame defined by the subset of players $R\subseteq \N$. 

\section{The algorithm}\label{sec:alg}
Our algorithm (see its pseudocode as Algorithm~\ref{alg:main}) has a simple general structure. It uses several constants: a tolerance parameter $\varepsilon\in [0,1/4]$ and the constants $\rho=\frac{1+2\sqrt{13}}{3}\approx 2.7371$, $\sigma=\rho+\varepsilon/3$, and $\tau=\rho+2\varepsilon/3$. The algorithm takes as input a cut game $\game$ with $n$ players and returns a state $\Sout$. As we will show, the algorithm terminates after polynomially many steps in terms of $n$ and $1/\varepsilon$ and the state $\Sout$ is a $(\rho+\varepsilon)$-approximate pure Nash equilibrium, with high probability.

\begin{algorithm}[ht]
 		{ 

 			{\bf Input:} A cut game $\game$ with a set $\N$ of $n$ players 
 			
 			{\bf Output:} A state $\Sout$ of $\game$
 			\caption{The algorithm for computing approximate pure Nash equilibria in cut games} \label{alg:main}
 			\begin{algorithmic}[1]
 				\STATE $\Delta\leftarrow 480n/\varepsilon^2$;\label{alg:line:1}
 				\STATE Set $\Umin\leftarrow \min_{j\in \N}{U_j}$, $\Umax\leftarrow \max_{j\in \N}{U_j}$, and $m\leftarrow 1 + \left\lfloor \log_\Delta{\left(\Umax/\Umin\right)}\right\rfloor$ ;\label{alg:line:2}
 				\STATE (Implicitly) partition players into block $B_1$, $B_2$, ..., $B_m$, such that $j\in B_i$ implies that $U_j\in \left(\Umax \Delta^{-i}, \Umax\Delta^{1-i}\right]$;\label{alg:line:3}
 				\STATE Set $S\leftarrow$ an arbitrary initial state;\label{alg:line:3+}
 				\STATE Update $S\leftarrow \GM(\game,S,B_1)$;\label{alg:line:4}
 				\FOR{phase $i\leftarrow 1$ to $m-1$ such that $B_i\not=\emptyset$}
 				\REPEAT
 				\STATE Update $(S,ch1)\leftarrow \LM(\game,S,B_i,\tau)$;
 				\STATE Update $(S,ch2)\leftarrow \GM(\game,S,B_{i+1})$;
 				\UNTIL{$ch1=\FALSE$ \AND $ch2=\FALSE$}
 				\ENDFOR
 				\STATE Update $S\leftarrow \LM(\game,S,B_m,\tau)$;\label{alg:line:10}
 				\RETURN $\Sout\leftarrow S$;
   		\end{algorithmic}
 		}
 	\end{algorithm}

The algorithm initializes parameter $\Delta$ to be polynomially related to $n$ and $1/\varepsilon$ (line \ref{alg:line:1}). Then, in lines \ref{alg:line:2} and \ref{alg:line:3}, it implicitly partitions the players into blocks $B_1$, $B_2$, ..., $B_m$ according to their maximum utility. Denoting by $U_{\max}$ the maximum utility among all players, block $B_i$ consists of the players with maximum utility in the interval $(U_{\max}\Delta^{-i},U_{\max}\Delta^{1-i}]$. By the definition of $\Delta$, the players in the same block have polynomially related maximum utilities.

The lines \ref{alg:line:3+}-\ref{alg:line:10} are the main body of the algorithm. The algorithm utilizes two routines, namely $\LM$() and $\GM$(), implementing a set of local and global moves among by players in a particular block. The computation is divided into phases. Starting with an arbitrary initial state (in line~\ref{alg:line:3+}), phase 0 (line \ref{alg:line:4}) executes global moves for the players in block $B_1$. Then, for $i=1, ..., m-1$, phase $i$ repeatedly executes local moves for the players in block $B_i$ and global moves for the players in block $B_{i+1}$ until the execution of these routines does not produces new states. Finally, the algorithm runs phase $m$, which executes local moves for the players in block $B_m$. The state produced after the execution of phase $m$ is returned as the output state $\Sout$.

The routine $\LM$() (see Algorithm~\ref{alg:local-moves}) takes as input the cut game $\game$, a current state $S$, a subset of players $B$, and a parameter $p>1$; on exit, it returns an output state and a flag denoting whether the output state is different than the input one. $\LM$() repeatedly performs $p$-moves as long as there is a player in set $B$ that has such a $p$-move to make.

\begin{algorithm}[ht]
 		{ 

 			{\bf Input:} A cut game $\game$ with a set $\N$ of $n$ players, a state $S$, a subset $B\subseteq \N$ of players, and a parameter $p>1$
 			
 			{\bf Output:} A state of $\game$ and a Boolean flag
 			\caption{The routine \LM()}	\label{alg:local-moves}
 			\begin{algorithmic}[1]
 				
 				\STATE Set $change\leftarrow \FALSE$;
 				\WHILE{there is a player $j\in B$ that satisfies $u_j(S_{-j},s'_j)>p\cdot u_j(S)$}
 				\STATE Update $S\leftarrow (S_{-j},s'_j)$;
 				\STATE $change\leftarrow \TRUE$;
 				\ENDWHILE
 				\RETURN $(S,change)$;
   		\end{algorithmic}
 		}
 	\end{algorithm}
 	
The routine $\GM$() is the main novelty of our algorithm compared to the algorithm in~\cite{CFG17}. It calls the routines $\SDP$() and $\ROUND$(). When executed on input a cut game $\game$ with a set $\N$ of $n$ players, state $S$, and the subset of players $B\subseteq \N$, the routine $\SDP$() solves the following semidefinite program:
\begin{eqnarray*}
&\mbox{maximize} & Z=\frac{1}{2}\cdot \sum_{(j,k)\in E\setminus \cut_{B}(S)}{w_{jk}\cdot (1-v_j\cdot v_k)}\\ & &-\frac{1}{4}\cdot \sum_{(j,k)\in \cut_{B}(S)}{w_{jk}\cdot \left(3\sigma-1+(\sigma-1)\cdot\widehat{v} \cdot v_j+(\sigma-1)\cdot \widehat{v}\cdot v_k-(\sigma+1) \cdot v_j\cdot v_k\right)}\\
&\mbox{subject to} &\sum_{(j,k)\in \cut_{B}(S)}{w_{jk}\cdot (3+v_j\cdot \widehat{v}+v_k\cdot \widehat{v}-v_j\cdot v_k)} \geq \min_{j\in B}{U_j}\\
& & \widehat{v}\cdot \widehat{v}=1, \widehat{v}\in \R^n\\
& & v_j\cdot v_j=1, v_j\in \R^n, \forall j\in B\\
& & v_j = -\widehat{v}, \forall j\in \N\setminus B
\end{eqnarray*}
The output of $\SDP$() is the matrix $\vb$ containing as columns the unit vectors $v_j$ for $j\in \N$ and $\widehat{v}$, and the objective value $Z$.

At a high level, the semidefinite program aims to identify a set of players $R\subseteq B$ so that the stretch of the current state for the subgame defined by the players in $R$ is at least $\sigma$. The special vector $\widehat{v}$ is used as a basis for the orientation of the remaining vectors. The vectors $v_j$ for $j\in \N\setminus B$ are set to $-\widehat{v}$, indicating that the players in $\N\setminus B$ are not considered for inclusion in the set $R$. For $j\in B$, vector $v_j$ indicates whether player $j$ belongs to set $R$ (when $v_j=\widehat{v}$) or not (when $v_j=-\widehat{v}$). Of course, the vectors in a solution of the semidefinite program can be much different than these two extreme ones. As we will see, if the semidefinite program has non-positive objective value, no set $R$ of players defining a subgame of high stretch at the current state exists, and this provides a useful condition for our algorithm. Otherwise, a subsequent appropriately performed (randomized) rounding of the SDP solution will yield a set of simultaneous (global) moves by players that increase the potential considerably. 

This is done by the routine $\ROUND$(), which takes as input a cut game $\game$ with a set $\N$ of $n$ players, a state $S$, a subset of players $B\subseteq \N$, and a matrix $\vb$ having vectors $v_j$ for each player $j\in B$ and the special vector $\widehat{v}$ as columns. $\ROUND$() performs $\frac{23040n^2}{\varepsilon^3}\cdot 
\ln{\left(\frac{414720n^4}{\varepsilon^4}\right)}$ iterations. In each of them, it decides a set $R\subseteq \N$ of players by rounding the set of unit vectors $\vb$. Rounding is performed as follows. First, for $j\in B$, each vector $v_j$ is rotated to produce the unit vector $v'_j$ as follows: $v'_j$ is the unit vector in $\R^n$ that lies on the plane defined by vectors $v_j$ and $\widehat{v}$ and is at angle $\frac{\pi}{2}\cdot (1-v_j\cdot \widehat{v})$ from vector $\widehat{v}$ and at angle $|\arccos{(v_j\cdot \widehat{v})}-\frac{\pi}{2}\cdot (1-v_j\cdot \widehat{v})|$ from vector $v_j$. Then, a random unit vector $r\in \R^n$ is picked. For $j\in B$, player $j$ is included in set $R$ if the inner products $v'_j\cdot r$ and $\widehat{v}\cdot r$ have the same sign. Informally, we can view vector $r$ as the unit normal vector of a random hyperplane; then, player $j$ is included in set $R$ if the rotated vector $v'_j$ and vector $\widehat{v}$ are on the same side of the hyperplane. $\ROUND$() keeps the set $R$ of players in the rounding iteration that produces the highest potential increase $\Phi_R(S_{-R},S'_R)-\Phi_R(S)$ and returns state $(S_{-R},S'_R)$ as output.

Now, the routine $\GM$() (see Algorithm~\ref{alg:global-moves}) takes as input the cut game $\game$, a state $S$, and a subset of players $B$; on exit, it returns an output state and a flag indicating whether the output state is different than the input one. $\GM$() repeatedly runs the routing $\SDP$(), as long as the objective value of the demidefinite program it solves is positive. Then, it executes the routine $\ROUND$() to round the solution of the semidefinite program and decide, in this way, the set of players who will change their strategy simultaneously. Essentially, this implements a global move. Prior to executing $\SDP$(), the routine $\LM$() is executed to make sure that the input state to $\SDP$() is an approximate equilibrium for the players in $B$.

\begin{algorithm}[ht]
 		{ 

 			{\bf Input:} A cut game $\game$ with a set $\N$ of $n$ players, a state $S$, and subset $B\subseteq \N$ of players
 			
 			{\bf Output:} A state of $\game$ and a Boolean flag
 			\caption{The routine \GM()}\label{alg:global-moves}	
 			\begin{algorithmic}[1]
 				
 				\STATE Set $change\leftarrow \FALSE$;
 				\STATE Update $(S,change) \leftarrow \LM(\game,S,B,\tau)$
 				\STATE Set $(\vb,Z)\leftarrow \SDP(\game,S,B)$;
 				\WHILE{$Z>0$}
 				\STATE Update $S\leftarrow \ROUND(\game,S,B,\vb)$;
 				\STATE Update $change\leftarrow\TRUE$;
 				 \STATE Update $S \leftarrow \LM(\game,S,B,\tau)$;
 				 \STATE Set $(\vb,Z)\leftarrow \SDP(\game,S,B)$;
 				\ENDWHILE
 				\RETURN $(S,change)$;
   		\end{algorithmic}
 		}
 	\end{algorithm}

This completes the detailed description of our algorithm. We will prove the following statement.

\begin{theorem}\label{thm:main}
Let $\varepsilon\in (0,1/4]$. On input a cut game $\game$ with $n$ players, Algorithm~\ref{alg:main} computes a $(\rho+\varepsilon)$-approximate pure Nash equilibrium $\Sout$ in time $\text{poly}(n,1/\varepsilon)$ with probability at least $1-1/n$, where $\rho=\frac{1+2\sqrt{13}}{3}\approx 2.7371$. 
\end{theorem}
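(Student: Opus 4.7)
The plan is to follow the general recipe of~\cite{CFG17} adapted to the global-moves routine $\GM$, proving Theorem~\ref{thm:main} by combining a polynomial-runtime argument, an invariant-maintenance argument across phases, a perturbation-bounding argument that lifts the per-phase invariants to the final $(\rho+\varepsilon)$-approximate equilibrium guarantee, and a high-probability amplification over the random rounding. The heavy lifting happens inside Lemma~\ref{lem:xor}, whose statement I take for granted here and whose proof is the principal obstacle.

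For the runtime, I would argue as follows. By Claim~\ref{claim:potential-increase-by-p-move} each $\tau$-move by a player $j\in B_i$ inside $\LM$ raises $\Phi$ by more than $\frac{\tau-1}{\tau+1}U_j > \frac{\tau-1}{\tau+1}\Umax\Delta^{-i}$, and Lemma~\ref{lem:xor} guarantees that each iteration of the inner while-loop of $\GM$ with $Z>0$ raises $\Phi$ by at least a constant fraction of $\min_{j\in B_{i+1}}U_j > \Umax\Delta^{-(i+1)}$. Meanwhile the total potential increase attributable to phase $i$ is bounded by the utility caps of the moving players, namely $\sum_{j\in B_i\cup B_{i+1}}U_j\le 2n\Umax\Delta^{1-i}$; dividing yields at most $\bigO(n\Delta^2/\varepsilon)$ moves per phase, which is polynomial given $\Delta=480n/\varepsilon^2$. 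Multiplied across the $m=\bigO(\log_\Delta(\Umax/\Umin))$ phases, the $\text{poly}(n,1/\varepsilon)$ iterations of $\ROUND$, and the polynomial-time SDP solves, the overall running time is $\text{poly}(n,1/\varepsilon)$. Next, I would verify two invariants at the end of phase $i$: every $j\in B_i$ has no $\tau$-move (from the exit condition of $\LM$), so $u_j(S)\ge U_j/(\tau+1)$ by Claim~\ref{claim:potential-increase-by-p-move}; and no $R\subseteq B_{i+1}$ induces a subgame of stretch $\ge \sigma$ at $S$, because the exit condition of $\GM$ forces $Z\le 0$ and Lemma~\ref{lem:xor} asserts this precludes any such $R$. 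In subsequent phases only players in blocks with $U_q\le \Umax\Delta^{-i}$ move; each such move perturbs $u_j(S)$ for $j\in B_i$ by at most $w_{jq}\le U_q$, and summing over the polynomial number of later moves together with $\Delta=480n/\varepsilon^2$ bounds the cumulative perturbation by $\bigO(\varepsilon)\cdot U_j$. Combined with the slack $\tau-\rho=2\varepsilon/3$, this gives $u_j(\Sout)\ge U_j/(\rho+\varepsilon+1)$, the defining condition of a $(\rho+\varepsilon)$-approximate equilibrium; players in $B_m$ are handled directly by the closing $\LM$ call on line~\ref{alg:line:10}.

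The principal obstacle is Lemma~\ref{lem:xor}, whose two directions I would establish as follows. When $Z\le 0$, no $R\subseteq B$ has stretch $\ge \sigma$: for any candidate $R$ the $\{-1,+1\}$-valued assignment $v_j=\widehat v$ for $j\in R$ and $v_j=-\widehat v$ otherwise is SDP-feasible and has objective exactly $\Phi_R(S_{-R},S'_R)-\sigma\Phi_R(S)$, contradicting $Z\le 0$. When $Z>0$, a single random hyperplane applied to the rotated vectors $v'_j$ yields a set $R$ with $\Phi_R(S_{-R},S'_R)\ge \sigma\Phi_R(S)$ with at least inverse-polynomial probability; the $\frac{23040n^2}{\varepsilon^3}\cdot\ln(414720n^4/\varepsilon^4)$ iterations of $\ROUND$ boost this to $\ge 1-\varepsilon^4/(414720n^4)$, and a union bound over all $\text{poly}(n,1/\varepsilon)$ invocations of $\ROUND$ delivers the overall $1-1/n$ success probability. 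The key inequality reduces to an edge-wise bound parameterized only by the angles $\arccos(v_j\cdot\widehat v)$, $\arccos(v_k\cdot\widehat v)$, and $\arccos(v_j\cdot v_k)$: the rotation $f(\theta)=\frac{\pi}{2}(1-\cos\theta)$ yields only a $1/2$-approximation for the XOR-type terms but handles the negative OR-type coefficients well, and matching the two sides produces the quadratic whose positive root is $\rho=\frac{1+2\sqrt{13}}{3}$. As in prior SDP-rounding analyses such as~\cite{LLZ02}, the remaining parameter range is shrunk via numerical enumeration to a tractable sub-range on which a closed-form proof is carried out.
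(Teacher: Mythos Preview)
Your high-level structure mirrors the paper's: block decomposition, per-phase invariants (no $\tau$-move for $B_i$, low stretch for subgames in $B_{i+1}$), polynomial runtime via potential-increase bookkeeping, and a union bound over $\ROUND$ calls. The rounding discussion and the role of $\rho$ are also on target, though note that in the paper Lemma~\ref{lem:xor} is specifically the two edge-wise inequalities on $\Pr[\xor(j,k)]$; the ``$Z\le 0$ implies low stretch'' statement is a separate (and easy) lemma, and ``$Z>0$ implies $\ROUND$ makes additive potential progress'' is yet another lemma built on top of Lemma~\ref{lem:xor}.

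The genuine gap is your perturbation argument. You write that each later move by some $q$ with $U_q\le \Umax\Delta^{-i}$ perturbs $u_j$ by at most $w_{jq}\le U_q$, and that summing over ``the polynomial number of later moves'' together with $\Delta=480n/\varepsilon^2$ yields a perturbation of $O(\varepsilon)U_j$. This does not follow. Even ignoring repeated moves and bounding the net perturbation by $\sum_q w_{jq}$ over all $q$ in later blocks that move at least once, you only get $\sum_{t>i}\sum_{q\in B_t}w_{jq}\le \sum_{t>i} nW_t\le 2nW_{i+1}$, whereas $U_j$ may be barely above $W_{i+1}$; the ratio is $\Theta(n)$, not $O(\varepsilon)$. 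Tracking the number of moves makes things worse, not better. The choice $\Delta=\Theta(n/\varepsilon^2)$ is calibrated so that a perturbation of order $W_{i+2}$ (one block further down) becomes $O(\varepsilon)U_j$, but your naive bound stays stuck at the $W_{i+1}$ scale.

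What actually drives the perturbation down to the $W_{i+2}$ scale is precisely the stretch invariant you record but never deploy. In phase $t>i$, the heavy players $R_t\cap B_t$ make $\tau$-moves; the edges they rip out of the cut toward frozen players form a set the paper calls $H_t$. The low-stretch property for $B_t$ at state $S^{t-1}$ (your second invariant, from the previous phase), combined with the fact that the $\tau$-moves strictly beat that stretch since $\tau>\sigma$, forces $w(H_t)$ to be controlled by quantities supported on $B_{t+1}$, hence $O(nW_{t+1}/\varepsilon)$. Summing over $t>i$ then gives a perturbation of order $nW_{i+2}/\varepsilon$, which combined with $U_j>W_{i+1}=\Delta W_{i+2}$ is indeed $O(\varepsilon)U_j$. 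In short: the stretch bound is not merely an observable invariant; it is the mechanism that prevents heavy $\tau$-moves in later phases from undoing the equilibrium of earlier blocks. Without invoking it in the perturbation step, the $(\rho+\varepsilon)$-approximation guarantee does not go through.
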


\section{Properties of global moves}\label{sec:global-moves}
We devote this section to proving two important properties of routines $\SDP$() and $\ROUND$(). In our analysis, we denote the upper boundary of block $B_i$ by $W_i$ and by $W_{m+1}$ the lower boundary of block $B_m$, i.e., $W_i=U_{\max}\Delta^{1-i}$ for $i=1, 2, ..., m+1$. So, the players of block $B_i$ are those with maximum utility $U_j\in (W_{i+1},W_i]$.

The next lemma guarantees that, whenever routine $\GM$() terminates, the state returned has stretch at most $\sigma$ for any subgame defined by subsets of players in block $B_{i+1}$. 

\begin{lemma}\label{lem:sdp-stretch}
Consider the execution of routine $\SDP()$ during phase $i\geq 0$, i.e., $\SDP$() takes as input the cut game $\game$, the block $B_{i+1}$ as the set of players $B$, and a state $S$ that is a $\tau$-approximate pure Nash equilibrium for the players in $B_{i+1}$. If the objective value $Z$ of the semidefinite program that $\SDP$() returns is non-positive, then for every set of players $R\subseteq B_{i+1}$, it holds that $\sigma\cdot \Phi_R(S) \geq \Phi_R(S_{-R},S'_R)$.
\end{lemma}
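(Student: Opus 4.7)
The plan is to show that the SDP objective $Z$ upper-bounds $\Phi_R(S_{-R},S'_R)-\sigma\cdot\Phi_R(S)$ for every $R\subseteq B_{i+1}$; the conclusion then follows immediately from $Z\le 0$ (the case $R=\emptyset$ is trivial since both sides vanish). To this end, for any fixed nonempty $R$ I would construct a canonical ``integer'' SDP solution by setting $v_j=\widehat{v}$ when $j\in R$ and $v_j=-\widehat{v}$ otherwise; this automatically satisfies $v_j\cdot v_j=\widehat{v}\cdot\widehat{v}=1$ and the frozen condition $v_j=-\widehat{v}$ for $j\in\N\setminus B_{i+1}$.

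The next step is to verify feasibility. A case check on the signs of $v_j\cdot\widehat{v}$, $v_k\cdot\widehat{v}$, and $v_j\cdot v_k\in\{-1,+1\}$ shows that the factor $3+v_j\cdot\widehat{v}+v_k\cdot\widehat{v}-v_j\cdot v_k$ equals $4$ whenever at least one of $j,k$ is in $R$ and $0$ otherwise. Since $R\subseteq B_{i+1}$, the edges of $\cut_{B_{i+1}}(S)$ with at least one endpoint in $R$ are exactly those of $\cut_R(S)$, so the left-hand side of the constraint evaluates to $4\,\Phi_R(S)$. Picking any $j\in R$ and using Claim~\ref{claim:potential-increase-by-p-move} on the $\tau$-approximate PNE hypothesis, together with $\Phi_R(S)\ge u_j(S)$ (which follows from $\cut_j(S)\subseteq\cut_R(S)$), yields $\Phi_R(S)\ge U_j/(\tau+1)\ge \min_{k\in B_{i+1}}U_k/(\tau+1)$. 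The specific constants give $\tau+1=\rho+1+2\varepsilon/3<4$ (using $\rho\approx 2.7371$ and $\varepsilon\le 1/4$), so $4\,\Phi_R(S)\ge \min_{k\in B_{i+1}}U_k$ and the constraint holds.

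I would then evaluate the SDP objective on this integer solution by partitioning edges according to (i) whether they lie in $\cut(S)$, and (ii) the pattern of their endpoints across $R$, $B_{i+1}\setminus R$, and $\N\setminus B_{i+1}$. The first sum contributes exactly the weight of non-cut edges with exactly one endpoint in $R$; the second sum contributes $(1-\sigma)$ times the weight of cut edges with both endpoints in $R$ plus $-\sigma$ times the weight of cut edges with exactly one endpoint in $R$; all other subcases cancel to $0$. Flipping the strategies of $R$ toggles the cut status of precisely the edges with exactly one endpoint in $R$ and preserves the status of edges with both endpoints in $R$, so the three surviving contributions match the three types of edges that govern $\Phi_R(S_{-R},S'_R)-\sigma\cdot\Phi_R(S)$ exactly. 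Thus the integer objective equals $\Phi_R(S_{-R},S'_R)-\sigma\cdot\Phi_R(S)$, giving $Z\ge\Phi_R(S_{-R},S'_R)-\sigma\cdot\Phi_R(S)$, and the hypothesis $Z\le 0$ completes the proof.

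The main obstacle will be keeping the edge-by-edge bookkeeping straight: the two SDP sums range over disjoint edge sets and each splits into several subcases depending on how each endpoint is classified among $R$, $B_{i+1}\setminus R$, and $\N\setminus B_{i+1}$, so one must check that edges internal to $\N\setminus B_{i+1}$, or between $B_{i+1}\setminus R$ and $\N\setminus B_{i+1}$, really do contribute zero, and that the three surviving cases combine in the right way. A secondary but essential point is the numerical inequality $\tau+1<4$; it is the single place where the chosen value of $\rho$ and the upper bound $\varepsilon\le 1/4$ enter the argument, and without it the proposed integer point would not be guaranteed to satisfy the SDP's linear constraint.
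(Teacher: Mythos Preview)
Your proposal is correct and follows essentially the same approach as the paper: construct the integer solution $v_j=\widehat{v}$ for $j\in R$ and $v_j=-\widehat{v}$ otherwise, check feasibility via the approximate-equilibrium bound from Claim~\ref{claim:potential-increase-by-p-move}, and verify that the objective on this solution equals $\Phi_R(S_{-R},S'_R)-\sigma\cdot\Phi_R(S)$. The only cosmetic differences are that the paper argues by contradiction and, for feasibility, bounds the constraint sum below by the terms incident to a single player $j^*\in R$ (getting $4u_{j^*}(S)\ge U_{j^*}$ directly from the $3$-approximate condition) rather than computing the full sum as $4\Phi_R(S)$.
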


\begin{proof}
Assume otherwise and let $R$ be a non-empty subset of $B_{i+1}$ such that 
\begin{align}\label{eq:assumption}
\sigma\cdot \Phi_R(S) &< \Phi_R(S_{-R},S'_R).
\end{align}
Let $\widehat{v}$ be an arbitrary unit vector and set $v_j=\widehat{v}$ for $j\in R$ and $v_j=-\widehat{v}$ for $j\in \N\setminus R$. Since $R\not=\emptyset$, there is at least one player $j^*\in R$ such that $v_{j^*}=\widehat{v}$. Then, $3+v_{j^*}\cdot \widehat{v}+v_k\cdot \widehat{v}-v_{j^*}\cdot v_k=4$ for every $k\in \N\setminus \{j^*\}$ and 
\begin{align*}
    \sum_{(j,k)\in \cut_{B_{i+1}}(S)}{w_{jk}\cdot (3+v_j\cdot \widehat{v}+v_k\cdot \widehat{v}-v_j\cdot v_k)}&\geq 4\cdot \wgt(\cut_{j^*}(S))\geq U_{j^*}\geq \min_{j\in B_{i+1}}{U_j},
\end{align*}
implying that $\vb$ is a feasible solution to the semidefinite program used in the call of $\SDP(\game, S, B_{i+1})$. The second last inequality follows using Claim~\ref{claim:potential-increase-by-p-move}, since the $\tau$-approximate (and, consequently, the $3$-approximate) equilibrium condition is satisfied for player $j^*$.

Now, observe that edge $(j,k)$ belongs to $\cut_R(S)$ if it belongs to $\cut_{B_{i+1}}(S)$ and at least one of the players $j$ and $k$ belongs to $R$. Also, the quantity $\frac{1}{4}\cdot (3+v_j\cdot \widehat{v}+v_k\cdot \widehat{v}-v_j\cdot v_k)$ is equal to $1$ if one of the players $j$ and $k$ belongs to $R$ and is equal to $0$, otherwise. Thus,
\begin{align}\label{eq:phi_R-of-S}
\Phi_R(S) &=\sum_{(j,k)\in \cut_R(S)}{w_{jk}}= \frac{1}{4}\cdot \sum_{(j,k)\in \cut_{B_{i+1}}(S)}{w_{jk}\cdot (3+v_j\cdot \widehat{v}+v_k\cdot \widehat{v}-v_j\cdot v_k)}. 
\end{align}

Furthermore, observe that edge $(j,k)$ belongs to $\cut_R(S_{-R},S'_R)$ if it belongs to $\cut_{B_{i+1}}(S)$ and both players $j$ and $k$ belong to $R$, or it does not belong to $\cut_{B_{i+1}}(S)$ and exactly one of players $j$ and $k$ belongs to $R$. The quantity $\frac{1}{4}\cdot (1+v_j\cdot \widehat{v}+v_k\cdot \widehat{v}+v_j\cdot v_k)$ is equal to $1$ if both players $j$ and $k$ belong to $R$ and is equal to $0$, otherwise. Also, the quantity $\frac{1}{2}\cdot (1-v_j\cdot v_k)$ is equal to $1$ if exactly one of the players $j$ and $k$ belongs to $R$ and is equal to $0$, otherwise. Thus,
\begin{align}\nonumber
    \Phi_R(S_{-R},S'_R) &=\sum_{(j,k)\in \cut_R(S_{-R},S'_R)}{w_{jk}}\\\nonumber
    &=\frac{1}{2}\cdot \sum_{(j,k)\in E\setminus \cut_{B_{i+1}}(S)}{w_{jk}\cdot (1-v_j\cdot v_k)}\\\label{eq:phi_R-of-deviation-from-S}
    &\quad\,+\frac{1}{4}\cdot \sum_{(j,k)\in \cut_{B_{i+1}}(S)}{w_{jk}\cdot (1+v_j\cdot \widehat{v}+v_k\cdot \widehat{v}+v_j\cdot v_k)}.
\end{align}
Using equations (\ref{eq:phi_R-of-S}) and (\ref{eq:phi_R-of-deviation-from-S}) and the definition of the semidefinite program used in the call of $\SDP(\game,S,B_{i+1})$, our assumption (\ref{eq:assumption}) yields
\begin{align*}
    0 &< \Phi_R(S_{-R},S'_R) - \sigma\cdot \Phi_R(S)\\
    &=\frac{1}{2}\cdot \sum_{(j,k)\in E\setminus \cut_{B_{i+1}}(S)}{w_{jk}\cdot (1-v_j\cdot v_k)}\\
    &\quad\,-\frac{1}{4}\cdot \sum_{(j,k)\in \cut_{B_{i+1}}(S)}{w_{jk}\cdot  (3\sigma-1+(\sigma-1)\cdot v_j\cdot \widehat{v}+(\sigma-1)\cdot v_k\cdot\widehat{v}-(\sigma+1)\cdot v_j\cdot v_k)}\\
    &= Z,
\end{align*}
which contradicts the condition on $Z$. The lemma follows.
\end{proof}

We now aim to prove (in Lemma~\ref{lem:potential-increase-by-rounding} below) that, whenever the routine $\ROUND$() is executed on the set of vectors produced by a previous execution of routine $\SDP$(), the potential increases considerably. To do so, we will use an important property of each rounding iteration, which we state in Lemma~\ref{lem:xor}; its proof is deferred to Section~\ref{sec:xor-proofs}. We denote by $\xor(j,k)$ the event that the random hyperplane with unit normal vector $r$ separates the rotated vectors $v'_j$ and $v'_k$, i.e., the quantities $v'_j\cdot r$ and $v'_j\cdot r$ have different signs.
\begin{lemma}\label{lem:xor}
Let $v_j$ and $v_k$ be unit vectors in $\R^n$ and let $v'_j$ and $v'_k$ be the corresponding rotated vectors with respect to another unit vector $\widehat{v}\in \R^n$. Then,
\begin{align}\label{eq:xor-lower}
    \Pr[\xor(j,k)]\geq \frac{1}{4}\cdot(1-v_j\cdot v_k)
\end{align}
and 
\begin{align}\label{eq:xor-upper}
    \Pr[\xor(j,k)]\leq \frac{1}{8}\cdot(3\rho-1+(\rho-1)\cdot v_j\cdot \widehat{v}+(\rho-1)\cdot v_k\cdot \widehat{v}-(\rho+1)\cdot v_j\cdot v_k).
\end{align}
\end{lemma}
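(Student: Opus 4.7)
The plan is to reduce the claim to a trigonometric inequality in a small number of parameters, treating the elementary lower bound~(\ref{eq:xor-lower}) and the delicate upper bound~(\ref{eq:xor-upper}) separately. Since both sides of each inequality depend only on inner products among the three vectors $v_j, v_k, \widehat{v}$, and the rotation preserves $\mathrm{span}(v_j, \widehat{v})$ and $\mathrm{span}(v_k, \widehat{v})$, I may assume the ambient space is $\R^3$. First I would choose coordinates with $\widehat{v}$ as the first axis and write $v_j = (\cos\theta_j)\widehat{v} + (\sin\theta_j)u_j$, $v_k = (\cos\theta_k)\widehat{v} + (\sin\theta_k)u_k$ for unit vectors $u_j, u_k \perp \widehat{v}$, and set $\cos\phi = u_j \cdot u_k$. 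After rotation, $v'_j = (\cos f(\theta_j))\widehat{v} + (\sin f(\theta_j))u_j$, and symmetrically for $v'_k$, so
\begin{align*}
    v_j \cdot v_k &= \cos\theta_j\cos\theta_k + \sin\theta_j\sin\theta_k\cos\phi,\\
    v'_j \cdot v'_k &= \cos f(\theta_j)\cos f(\theta_k) + \sin f(\theta_j)\sin f(\theta_k)\cos\phi,
\end{align*}
and, by the standard hyperplane-rounding identity, $\Pr[\xor(j,k)] = \arccos(v'_j \cdot v'_k)/\pi$. Both inequalities thus become statements in the three real parameters $\theta_j, \theta_k, \phi \in [0,\pi]$.

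Next I would eliminate $\phi$ by showing that, for fixed $\theta_j, \theta_k$, the worst case for each inequality is attained at $\phi = \pi$ (the coplanar configuration with $v_j, v_k$ on opposite sides of $\widehat{v}$). The right-hand sides are affine in $\cos\phi$, while the left-hand side is $\arccos$ of an affine function of $\cos\phi$; comparing derivatives in $\cos\phi$, exploiting the convexity of $\arccos$ on $[-1,0]$ and concavity on $[0,1]$, and checking boundary values at $\phi \in \{0, \pi\}$ isolates the extremum. At $\phi = \pi$ we have $v_j \cdot v_k = \cos(\theta_j + \theta_k)$ and $v'_j \cdot v'_k = \cos(f(\theta_j)+f(\theta_k))$, reducing each inequality to the two-variable form promised in the introduction.

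For the lower bound~(\ref{eq:xor-lower}), the reduced statement compares $\min(f(\theta_j)+f(\theta_k),\, 2\pi-f(\theta_j)-f(\theta_k))$ with $\frac{\pi}{4}(1 - \cos(\theta_j+\theta_k))$. Using $f(\theta) = \pi\sin^2(\theta/2)$ together with the elementary bound $\sin^2(a+b) \leq 2(\sin^2 a + \sin^2 b)$, this reduces to a trigonometric check by calculus, with the regime $f(\theta_j)+f(\theta_k) > \pi$ handled separately. For the upper bound~(\ref{eq:xor-upper}), the constant $\rho = (1+2\sqrt{13})/3$ is chosen precisely so that the reduced two-variable inequality holds with equality at a critical configuration. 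I would follow the hybrid strategy alluded to in the introduction: numerically check the inequality on a sufficiently fine grid in $[0,\pi]^2$ to localize the worst-case region, and then give a closed-form proof in that region using the quadratic equation defining $\rho$, with separate analytical arguments for the degenerate boundaries where $\theta_j$ or $\theta_k$ is $0$ or $\pi$.

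The main obstacle is inequality~(\ref{eq:xor-upper}). The rotation function $f$ is tuned to approximate the OR-type SDP terms well at the cost of the XOR-type ones, and the upper bound encodes a delicate trade-off that yields the quadratic whose root is $\rho$. Pinning down the exact worst-case $(\theta_j, \theta_k)$ and showing strict slack elsewhere is the technically demanding step; this is where the numerical verification is indispensable, and the formal analytical treatment near the critical point must carefully exploit the defining equation of $\rho$.
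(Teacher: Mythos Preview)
Your parametrization and reduction to three angles $\theta_j,\theta_k,\phi$ matches the paper exactly, but the step where you ``eliminate $\phi$'' by claiming the worst case for each inequality is at $\phi=\pi$ does not hold, and this is where your plan diverges from the paper's actual argument.

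For the lower bound~(\ref{eq:xor-lower}), take $\theta_j=\theta_k$. At $\phi=0$ both sides vanish, so the inequality is tight, whereas at $\phi=\pi$ there is strict slack. Hence the worst case in $\phi$ is at $\phi=0$, not $\phi=\pi$, and your reduced statement $\min(f(\theta_j)+f(\theta_k),\,2\pi-f(\theta_j)-f(\theta_k))\geq\frac{\pi}{4}(1-\cos(\theta_j+\theta_k))$ does not cover this. What the paper does instead is show that the rearranged expression $v'_j\cdot v'_k-\cos\bigl(\frac{\pi}{4}(1-v_j\cdot v_k)\bigr)$ is \emph{convex} in $\cos\phi$, so its maximum is at one of the two endpoints $\cos\phi\in\{-1,1\}$; it then proves the inequality separately at $\phi=0$ (yielding $|f(\theta_j)-f(\theta_k)|\geq\frac{\pi}{4}(1-\cos(\theta_j-\theta_k))$) and at $\phi=\pi$ (yielding your inequality). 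Both coplanar cases are needed.

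For the upper bound~(\ref{eq:xor-upper}), the reduction to $\phi=\pi$ fails more fundamentally: the difference $\Pr[\xor(j,k)]-\text{RHS}$ is not monotone or convex in $\cos\phi$. The paper computes its derivative in $z=\cos\phi$ explicitly, finds two critical points $z_1<z_2$, and shows there is an interior local maximum at $z_2$; for a nontrivial region of $(\theta_j,\theta_k)$ this $z_2$ lies in $(-1,1)$. Consequently the paper does \emph{not} reduce to the coplanar case before invoking numerics: it gives formal proofs for $\phi\in\{0,\pi\}$ (this is where the quadratic defining $\rho$ emerges, with tightness only at $\theta_j=\theta_k=\pi$), and then numerically verifies that $\lambda(\theta_j,\theta_k,z_2(\theta_j,\theta_k))\leq 0$ over the region where $z_2\in[-1,1]$. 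Your plan to first collapse to $\phi=\pi$ and only then grid-search over $(\theta_j,\theta_k)$ would miss this interior maximum.
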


\begin{lemma}\label{lem:potential-increase-by-rounding}
Consider the execution of routine $\SDP(\game,S,B_{i+1})$ during phase $i\geq 0$. If the solution $\vb$ to the semidefinite program computed by $\SDP(\game,S,B_{i+1})$ has objective value $Z>0$, then, the probability that the subsequent call to $\ROUND(\game,S,B_{i+1},v)$ fails to identify a set of players $R\subseteq B_{i+1}$ so that $\Phi_R(S_{-R},S'_R)-\Phi_R(S)\geq \frac{\varepsilon}{48}\cdot W_{i+2}$ is at most $\frac{\varepsilon^4}{414720n^4}$. 
\end{lemma}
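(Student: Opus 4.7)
The plan is to analyze one iteration of $\ROUND$ in expectation, convert the expected gain into a lower bound on the single-iteration success probability via a Markov-type argument, and amplify across the $T:=\frac{23040\,n^2}{\varepsilon^3}\ln\!\left(\frac{414720\,n^4}{\varepsilon^4}\right)$ independent iterations. Fix one iteration with random unit vector $r$ and let $R\subseteq B_{i+1}$ be the resulting selected set. First I would compute the contribution of each edge $(j,k)$ to $X:=\Phi_R(S_{-R},S'_R)-\Phi_R(S)$ by cases on $|\{j,k\}\cap R|$. If $|\{j,k\}\cap R|\in\{0,2\}$ the edge contributes $0$: in the first case it is never in $\cut_R$; in the second both endpoints flip together so the edge stays on the same side of the cut. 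If $|\{j,k\}\cap R|=1$ the edge contributes $+w_{jk}$ when $(j,k)\notin\cut(S)$ and $-w_{jk}$ when $(j,k)\in\cut(S)$; since $R\subseteq B_{i+1}$, such an edge lies in $\cut(S)$ iff it lies in $\cut_{B_{i+1}}(S)$. The event ``exactly one of $j,k$ is in $R$'' is exactly $\xor(j,k)$ as defined just before Lemma~\ref{lem:xor}: a player $j\in\N\setminus B_{i+1}$ has $v'_j=-\widehat v$ and is never selected.

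Taking the expectation over $r$ and applying the lower bound (\ref{eq:xor-lower}) to the first (positively weighted) sum and the upper bound (\ref{eq:xor-upper}) to the (negatively weighted) second sum, the vector-dependent expression matches $\tfrac{1}{2}Z$ up to the fact that every OR-type term features $\rho$ instead of the $\sigma$ appearing in $Z$. A direct algebraic rearrangement then yields
\begin{align*}
\E[X]\;\geq\; \tfrac{1}{2}\,Z \;+\; \tfrac{\sigma-\rho}{8}\sum_{(j,k)\in \cut_{B_{i+1}}(S)} w_{jk}\bigl(3+v_j\cdot\widehat v+v_k\cdot\widehat v-v_j\cdot v_k\bigr).
\end{align*}
By the SDP feasibility constraint the OR-type sum is at least $\min_{j\in B_{i+1}}U_j>W_{i+2}$, and combining this with $\sigma-\rho=\varepsilon/3$ and $Z>0$ gives $\E[X]>\frac{\varepsilon}{24}\,W_{i+2}$.

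To convert expectation into a lower bound on $\Pr[X\geq\alpha]$ with $\alpha:=\frac{\varepsilon}{48}W_{i+2}$, I need an upper bound on $X$. Since $R\subseteq B_{i+1}$, we have $X\leq\Phi_R(S_{-R},S'_R)\leq \wgt(E_R)\leq n\cdot\max_{j\in B_{i+1}}U_j\leq n W_{i+1}=n\Delta W_{i+2}=:M$. The one-sided Markov decomposition $\E[X]\leq \alpha+M\cdot\Pr[X\geq\alpha]$, together with $\E[X]>2\alpha$, gives $\Pr[X\geq\alpha]>\alpha/M=\varepsilon/(48\,n\Delta)=\varepsilon^3/(23040\,n^2)$ after substituting $\Delta=480n/\varepsilon^2$. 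Independence of the $T$ iterations then bounds the probability that every iteration fails to achieve $\alpha$ by $(1-\varepsilon^3/(23040\,n^2))^T\leq e^{-\varepsilon^3 T/(23040\,n^2)}=\varepsilon^4/(414720\,n^4)$, which is the claimed bound because $\ROUND$ returns the best iteration.

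The only technically delicate step is the algebraic manipulation in the second paragraph: Lemma~\ref{lem:xor} upper-bounds $\Pr[\xor(j,k)]$ using $\rho$, whereas the SDP objective $Z$ has its OR-type coefficients parameterized by $\sigma=\rho+\varepsilon/3$, and one must verify that the residual cleanly becomes $(\sigma-\rho)/8$ times exactly the LHS of the SDP constraint. It is precisely the built-in slack $\sigma-\rho=\varepsilon/3$ of the algorithm that converts the qualitative guarantee $Z>0$ into the quantitative $\Omega(\varepsilon W_{i+2})$ improvement needed to drive the Markov amplification.
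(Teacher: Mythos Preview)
Your proof is correct and follows essentially the same route as the paper: compute the expected potential gain of a single rounding iteration via Lemma~\ref{lem:xor}, rewrite it as $\tfrac{Z}{2}+\tfrac{\sigma-\rho}{8}\cdot(\text{OR-sum})>\tfrac{\varepsilon}{24}W_{i+2}$ using the SDP constraint, then use a Markov-type bound with the deterministic upper bound $X\le \sum_{j\in B_{i+1}}U_j\le nW_{i+1}$ to get a per-iteration success probability of at least $\varepsilon^3/(23040n^2)$ and amplify over the $T$ iterations. The only cosmetic difference is that the paper applies Markov to the nonnegative variable $\sum_{j\in B_{i+1}}U_j-X$, whereas you use the equivalent decomposition $\E[X]\le\alpha+M\Pr[X\ge\alpha]$; both yield the identical numerical bound.
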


\begin{proof}
Consider the execution of routine $\ROUND$() in phase $i$ and a rounding iteration. Let $R\subseteq B_{i+1}$ be the set of players selected. Observe that the increase in the potential $\Phi_R$ from state $S$ to state $(S_{-R},S'_R)$ is equal to the total weight of the edges $(j,k)$ that do not belong to $\cut_{B_{i+1}}(S)$ and exactly one of the players $j$ and $k$ change strategy, minus the total weight of the edges $(j,k)$ that belong to $\cut_{B_{i+1}}(S)$ and exactly one of the players $j$ and $k$ change strategy. Using this observation and Lemma~\ref{lem:xor}, we have
\begin{align*}
&\E[\Phi_R(S_{-R},S'_R)-\Phi_R(S)]\\
&= \sum_{(j,k)\in E\setminus \cut_{B_{i+1}}(S)}{w_{jk}\cdot \Pr[\xor(j,k)]}-\sum_{(j,k)\in \cut_{B_{i+1}}(S)}{w_{jk}\cdot \Pr[\xor(j,k)]}\\
&\geq \frac{1}{4}\cdot \sum_{(j,k)\in E\setminus \cut_{B_{i+1}}(S)}{w_{jk}\cdot (1-v_j\cdot v_k)}\\
    &\quad\,-\frac{1}{8}\cdot \sum_{(j,k)\in \cut_{B_{i+1}}(S)}{w_{jk}\cdot (3\rho-1+(\rho-1)\cdot v_j\cdot \widehat{v}+(\rho-1)\cdot v_k\cdot\widehat{v}-(\rho+1)\cdot v_j\cdot v_k)}\\
    &=\frac{Z}{2}+\frac{\sigma-\rho}{8}\cdot \sum_{(j,k)\in \cut_{B_{i+1}}(S)}{w_{jk}\cdot (3+v_j\cdot \widehat{v}+v_k\cdot \widehat{v}-v_j\cdot v_k)}\\
    &> \frac{\varepsilon}{24}\cdot W_{i+2}.
\end{align*}
The last inequality follows by condition $Z>0$, the constraint of the semidefinite program and the definition of $W_{i+2}$, and the fact that $\sigma-\rho=\varepsilon/3$.

Now, observe that the random variable $\sum_{j\in B_{i+1}}{U_j}-\Phi_R(S_{-R},S'_R)+\Phi_R(S)$ is always non-negative. Applying the Markov inequality on it, we get
\begin{align*}
    &\Pr\left[\Phi_R(S_{-R},S'_R)-\Phi_R(S)\leq \frac{\varepsilon}{48}\cdot W_{i+2}\right]\\
    &=\Pr\left[\sum_{j\in B_{i+1}}{U_j}-\Phi_R(S_{-R},S'_R)+\Phi_R(S) \geq \sum_{j\in B_{i+1}}{U_j}-\frac{\varepsilon}{48}\cdot W_{i+2}\right]\\
    &\leq \frac{\sum_{j\in B_{i+1}}{U_j}-\E[\Phi_R(S_{-R},S'_R)-\Phi_R(S)]}{\sum_{j\in B_{i+1}}{U_j}-\frac{\varepsilon}{48}\cdot W_{i+2}}\\
    &< \frac{\sum_{j\in B_{i+1}}{U_j}-\frac{\varepsilon}{24}\cdot W_{i+2}}{\sum_{j\in B_{i+1}}{U_j}-\frac{\varepsilon}{48}\cdot W_{i+2}}\leq 1-\frac{\frac{\varepsilon}{48}\cdot W_{i+2}}{\sum_{j\in B_{i+1}}{U_j}}\leq 1-\frac{\varepsilon\cdot W_{i+2}}{48n\cdot W_{i+1}}=1-\frac{\varepsilon^3}{23040n^2}.
\end{align*}
I.e., the probability that $\Phi_R(S_{-R},S'_R)-\Phi_R(S)\leq \frac{\varepsilon}{48}\cdot W_{i+2}$ in a rounding iteration is at most $1-\frac{\varepsilon^3}{23040n^2}$. Thus, the probability that the potential increase is at most $\frac{\varepsilon}{48}\cdot W_{i+2}$ after all the  $\frac{23040n^2}{\varepsilon^3}\cdot \ln{\left(\frac{414720n^4}{\varepsilon^4}\right)}$ rounding iterations in an execution of $\ROUND$() is at most 
\begin{align*}
    \left(1-\frac{\varepsilon^3}{23040n^2}\right)^{\frac{23040n^2}{\varepsilon^3}\cdot \ln{\left(\frac{414720n^4}{\varepsilon^4}\right)}} \leq \frac{\varepsilon^4}{414720n^4}.
\end{align*}
The last inequality follows by the property $(1-1/t)^t\leq 1/e$.
\end{proof}

\section{Proof of Theorem~\ref{thm:main}}\label{sec:proof}
We now show how to use the main properties of global moves from Section~\ref{sec:global-moves}, as well as three additional claims (Claims~\ref{claim:small_block},~\ref{claim:stretch}, and~\ref{claim:p-move}) to prove Theorem~\ref{thm:main}. In our proof, we denote by $S^i$ the state reached at the end of phase $i\geq 0$, i.e., $\Sout=S^{m}$. We also denote by $R_i$ the set of players that changed their strategy during phase $i$. 

For every phase $i>0$, we define the following set of edges:
\begin{align*}
A_i &= \{(j,k)\in \cut_{R_i}(S^{i-1}): j,k\in R_i\cap B_i\}\\
C_i &= \{(j,k)\in \cut_{R_i}(S^{i-1}): j,k\in R_i\cap B_{i+1}\}\\
D_i &= \{(j,k)\in \cut_{R_i}(S^{i-1}): j\in R_i\cap B_{i}, k\in R_i\cap B_{i+1}\}\\
F_i &= \{(j,k)\not\in \cut_{R_i}(S^{i-1}): j\in R_i\cap B_{i}, k\in R_i\cap B_{i+1}\}\\
H_i &= \{(j,k)\in \cut_{R_i}(S^{i-1}): j\in R_i\cap B_{i}, k\not\in R_i\}\\
I_i &= \{(j,k)\not\in \cut_{R_i}(S^{i-1}): j\in R_i\cap B_{i}, k\not\in R_i\}\\
J_i &= \{(j,k)\in \cut_{R_i}(S^{i-1}): j\in R_i\cap B_{i+1}, k\not\in R_i\}\\
K_i &= \{(j,k)\not\in \cut_{R_i}(S^{i-1}): j\in R_i\cap B_{i+1}, k\not\in R_i\}
\end{align*}
We will express several properties maintained by our algorithm in terms of the weights of the above edge sets. The first one uses the definition of block $B_{i+1}$.

\begin{claim} \label{claim:small_block}
For every phase $i>0$, it holds $\wgt(C_i)+\wgt(D_i)+\wgt(F_i)+\wgt(J_i)+\wgt(K_i)\leq n\cdot W_{i+1}$.
\end{claim}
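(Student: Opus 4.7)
}

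The plan is to observe that every edge counted on the left-hand side is incident to at least one player in $R_i \cap B_{i+1}$, and then charge its weight to the maximum utility of such a player. Inspecting the definitions of $C_i$, $D_i$, $F_i$, $J_i$, $K_i$, each of these sets requires at least one endpoint to lie in $R_i \cap B_{i+1}$: edges in $C_i$ have both endpoints there, edges in $D_i$ and $F_i$ have one endpoint in $R_i \cap B_i$ and the other in $R_i \cap B_{i+1}$, and edges in $J_i$ and $K_i$ have one endpoint in $R_i \cap B_{i+1}$ (with the other outside $R_i$). Moreover, the five sets are pairwise disjoint, because their defining endpoint configurations (and the cut/non-cut status) are mutually exclusive.

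Therefore every edge contributing to $\wgt(C_i)+\wgt(D_i)+\wgt(F_i)+\wgt(J_i)+\wgt(K_i)$ belongs to $E_{R_i \cap B_{i+1}}$, and in fact each such edge is incident to at least one player of $R_i \cap B_{i+1}$. Charging each edge to any one of its endpoints in $R_i \cap B_{i+1}$, we obtain
\[
\wgt(C_i)+\wgt(D_i)+\wgt(F_i)+\wgt(J_i)+\wgt(K_i) \;\leq\; \sum_{j \in R_i \cap B_{i+1}} U_j,
\]
since the right-hand side counts the weight of each edge in $E_{R_i \cap B_{i+1}}$ at least once (edges in $C_i$ are in fact counted twice, which only strengthens the inequality).

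To finish, I invoke the definition of the blocks from line~\ref{alg:line:3} of Algorithm~\ref{alg:main}: every player $j \in B_{i+1}$ satisfies $U_j \leq W_{i+1}$. Combined with the trivial bound $|R_i \cap B_{i+1}| \leq |B_{i+1}| \leq n$, this yields
\[
\sum_{j \in R_i \cap B_{i+1}} U_j \;\leq\; |R_i \cap B_{i+1}| \cdot W_{i+1} \;\leq\; n \cdot W_{i+1},
\]
completing the proof. There is no real obstacle here: the statement is a direct bookkeeping consequence of the definitions of the edge sets and of the blocks, and the only thing to be careful about is verifying that every one of the five edge sets does have an endpoint in $R_i \cap B_{i+1}$, which is immediate from inspection.
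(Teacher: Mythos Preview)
Your proof is correct and follows essentially the same approach as the paper's own proof: observe that each edge in $C_i\cup D_i\cup F_i\cup J_i\cup K_i$ is incident to some player in $B_{i+1}$ (you use the slightly sharper $R_i\cap B_{i+1}$), then bound the total weight by $\sum_j U_j\le n\cdot W_{i+1}$ using the block definition. The paper's proof is simply a terser version of the same charging argument.
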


The next claim uses the main property guaranteed by the last call of routine $\SDP$() in phase $i-1$.

\begin{claim} \label{claim:stretch}
For every phase $i>0$, it holds $-(\sigma-1)\cdot \wgt(A_i) - \sigma\cdot \wgt(D_i) - \sigma\cdot \wgt(H_i) + \wgt(F_i) + \wgt(I_i)\leq 0$. 
\end{claim}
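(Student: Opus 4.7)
The plan is to apply Lemma~\ref{lem:sdp-stretch} at state $S^{i-1}$ with the player set $R' := R_i \cap B_i$, and then unpack each potential into the edge families listed above. First I would verify that Lemma~\ref{lem:sdp-stretch} is indeed applicable at $S^{i-1}$: for every $i\ge 1$, phase $i-1$ ends with an invocation of $\GM(\game,\cdot,B_i)$ (line~\ref{alg:line:4} when $i=1$, and the final $\GM$ call of the repeat-loop when $2\le i\le m$). Inspecting the pseudocode of $\GM$, a return from that call presupposes that the preceding $\LM(\game,\cdot,B_i,\tau)$ produced no change --- so $S^{i-1}$ is a $\tau$-approximate pure Nash equilibrium for the players of $B_i$ --- and that the $\SDP$ call on $(\game,S^{i-1},B_i)$ returned $Z\le 0$. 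Hence the hypotheses of Lemma~\ref{lem:sdp-stretch} (with its ``phase $i$'' playing the role of our phase $i-1$, and its block $B_{i+1}$ playing the role of our $B_i$) are satisfied, and the conclusion applied to $R=R'\subseteq B_i$ gives $\sigma\cdot \Phi_{R'}(S^{i-1})\ge \Phi_{R'}\bigl((S^{i-1})_{-R'},(S^{i-1})'_{R'}\bigr)$.

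If $R'=\emptyset$, each of $A_i,D_i,F_i,H_i,I_i$ is empty by definition (they all require an endpoint in $R_i\cap B_i$), so the claim reduces to $0\le 0$. Otherwise, I would compute the two potentials by case analysis on the type of the other endpoint of each edge incident to $R'$. The cut edges incident to $R'$ split cleanly into three classes: (a) those with both endpoints in $R'$, which is exactly $A_i$; (b) those crossing to $R_i\cap B_{i+1}$, which is exactly $D_i$; and (c) those crossing to $\N\setminus R_i$, which is exactly $H_i$. Therefore $\Phi_{R'}(S^{i-1})=\wgt(A_i)+\wgt(D_i)+\wgt(H_i)$.

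For the flipped state, the elementary fact is that an edge with both endpoints in $R'$ preserves its cut status when $R'$ flips, while an edge with exactly one endpoint in $R'$ toggles it. Thus $A_i$ remains in the cut, while $D_i$ and $H_i$ leave it, and $F_i$ and $I_i$ (their respective ``not-in-cut'' counterparts defined above) enter it. This yields $\Phi_{R'}\bigl((S^{i-1})_{-R'},(S^{i-1})'_{R'}\bigr)=\wgt(A_i)+\wgt(F_i)+\wgt(I_i)$. Substituting the two expressions into the stretch inequality and rearranging gives $-(\sigma-1)\wgt(A_i)-\sigma\wgt(D_i)-\sigma\wgt(H_i)+\wgt(F_i)+\wgt(I_i)\le 0$, which is the claim.

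The only real subtlety is justifying the stretch guarantee at the end-of-phase state $S^{i-1}$; once the structural argument about $\GM$'s exit condition (outlined in the first paragraph) is in place, everything else is direct edge-by-edge accounting. The boundary case $i=m$ also fits the same template: there $R_m\subseteq B_m$, so $R'=R_m$ and the sets $D_m,F_m$ are empty automatically, with the stretch guarantee for $B_m$ supplied by the final $\GM(\game,\cdot,B_m)$ call of phase $m-1$. I expect no genuine obstacle beyond this bookkeeping.
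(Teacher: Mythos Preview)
Your proposal is correct and follows essentially the same approach as the paper's proof: apply Lemma~\ref{lem:sdp-stretch} at state $S^{i-1}$ with $R=R_i\cap B_i\subseteq B_i$, identify $\cut_{R_i\cap B_i}(S^{i-1})=A_i\cup D_i\cup H_i$ and $\cut_{R_i\cap B_i}$ of the flipped state as $A_i\cup F_i\cup I_i$, and rearrange. The paper is terser about why the lemma applies (it simply cites ``the definition of routine $\GM$''), whereas you spell out the exit-condition argument; one small quibble is that for $i=1$ the single $\GM$ call of phase~$0$ need not return with \emph{change} false, but your conclusion still holds because $\LM$ is always executed immediately before the terminating $\SDP$ call inside $\GM$, so the returned state is a $\tau$-equilibrium for $B_i$ with $Z\le 0$ regardless.
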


Finally, our third claim follows by the fact that the players in block $R_i\cap B_i$ make $\tau$-moves in phase $i$.

\begin{claim} \label{claim:p-move}
For every phase $i>0$, it holds $(\tau-1)\cdot \wgt(A_i)+(\tau-1)\cdot \wgt(D_i)+\tau\cdot \wgt(H_i)-\wgt(I_i)+\tau\cdot \wgt(J_i)-\tau\cdot \wgt(K_i) <0$.
\end{claim}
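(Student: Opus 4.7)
} My plan is to derive the claim by summing the defining strict $\tau$-move inequalities over the players in $R_i$ that changed strategy during phase~$i$, and re-expressing the result in the edge-set basis introduced just before the claim. Throughout, write $R'_i = R_i \cap B_i$ and $R''_i = R_i \cap B_{i+1}$.

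First, I will decompose, for each $j \in R'_i$, the quantities $u_j(S^{i-1})$, $u_j(S^i)$, and the deviation utility $u_j(S^i_{-j}, s'_j)$ into sums of edge weights restricted to edges incident to $j$. Sorting edges at $j$ by the block membership of the other endpoint (in $R'_i$, in $R''_i$, or outside $R_i$) and by cut-status at $S^{i-1}$, one obtains $u_j(S^{i-1}) = \wgt(A_i[j]) + \wgt(D_i[j]) + \wgt(H_i[j])$, $u_j(S^i) = \wgt(A_i[j]) + \wgt(D_i[j]) + \wgt(I_i[j])$, and $u_j(S^i_{-j}, s'_j) = \wgt(A'_i[j]) + \wgt(F_i[j]) + \wgt(H_i[j])$, where $A'_i$ denotes the natural companion of $A_i$ consisting of pairs of $R'_i$-players whose connecting edge is not in $\cut(S^{i-1})$. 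Analogous expressions hold for $j \in R''_i$ using $C_i, C'_i, J_i, K_i$.

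Next, I will exploit the fact that every $j \in R_i$ flipped its strategy during phase~$i$ and therefore performed at least one $\tau$-move, at which the strict inequality $u_j(\text{post-move}) > \tau \cdot u_j(\text{pre-move})$ held. I plan to invoke this at $j$'s last $\tau$-move in the phase; after that move, $j$'s strategy matches its strategy at $S^i$, so the remaining change in $u_j$ between the post-flip intermediate state and $u_j(S^i)$ is due solely to subsequent flips by the other members of $R_i$. Those flips affect only edges incident to $j$ that lie in $A_i \cup A'_i$ or $D_i \cup F_i$ (for $j \in R'_i$), and symmetrically in $C_i \cup C'_i$ or $D_i \cup F_i$ (for $j \in R''_i$). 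Summing the resulting strict inequalities over all $j \in R_i$ and substituting the first-step decompositions produces a weighted combination in which the contributions of $A'_i, C'_i, F_i$ appear symmetrically on both sides and cancel, leaving precisely the claimed linear combination of $\wgt(A_i), \wgt(D_i), \wgt(H_i), \wgt(I_i), \wgt(J_i), \wgt(K_i)$. The strict inequality is inherited from the strict $\tau$-move inequalities, which contribute strictness whenever $R_i$ is non-empty; when $R_i = \emptyset$ every edge set in the claim is empty and the statement is vacuous.

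The main obstacle is the second step, namely propagating the intermediate-state inequality to $S^i$. Because players in $R_i \setminus \{j\}$ may flip several times between $j$'s last $\tau$-move and the end of phase~$i$, one must carefully track the cumulative effect of these flips on $u_j$, making sure each edge incident to $j$ is counted with the correct sign and multiplicity. The final clean cancellation of the $A'_i, C'_i, F_i$ terms depends on this accounting producing perfectly symmetric contributions across the two endpoints of each edge in $A_i \cup A'_i$, $C_i \cup C'_i$, and $D_i \cup F_i$.
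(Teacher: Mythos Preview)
Your plan has a genuine gap at its central premise. You write that ``every $j \in R_i$ flipped its strategy during phase~$i$ and therefore performed at least one $\tau$-move''. This is false for the players in $R''_i = R_i \cap B_{i+1}$: during phase~$i$, the only moves made by players in block $B_{i+1}$ come from the routine $\GM(\game,S,B_{i+1})$, and inside $\GM$ such a player may change her strategy solely via the call to $\ROUND$, which flips an entire set of players simultaneously. For a player whose (net) strategy change comes from $\ROUND$ flips, there is no individual inequality $u_j(\text{post})>\tau\cdot u_j(\text{pre})$ to invoke. Hence the summands you intend to extract from $R''_i$ --- the ones that are supposed to produce the $C_i,J_i,K_i$ contributions --- never materialise, and the claimed cancellation cannot be carried out.

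The paper avoids this obstacle by not summing any per-player inequality over $R''_i$ at all. It uses the $\tau$-move inequality only for $j\in R'_i=R_i\cap B_i$ (where every move is indeed a local $\tau$-move inside $\LM$) and compares the \emph{total} potential gain $\Phi_{R_i}(S^i)-\Phi_{R_i}(S^{i-1})=\wgt(I_i)+\wgt(K_i)-\wgt(H_i)-\wgt(J_i)$ to the sum, over $j\in R'_i$, of the utility gain at $j$'s last $\tau$-move. Since every move in the phase --- including the $\ROUND$ flips of $B_{i+1}$-players --- increases the potential, this total dominates $\sum_{j\in R'_i}(1-1/\tau)\wgt(L_j)\geq (1-1/\tau)\big(\wgt(A_i)+\wgt(D_i)+\wgt(I_i)\big)$, where $L_j$ is the set of cut edges incident to $j$ right after her last $\tau$-move. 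Thus the $J_i$ and $K_i$ terms enter through the potential difference on the left, not through $\tau$-move inequalities of $B_{i+1}$-players. This potential-telescoping viewpoint also dissolves the ``propagation'' obstacle you flag in your second step, because it never needs to track how $u_j$ evolves between $j$'s last move and $S^i$.
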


We are now ready to prove Theorem~\ref{thm:main}. We do so in Lemmas~\ref{lem:equilibrium} and~\ref{lem:running-time} below. Assuming that all calls of routine $\ROUND$() are successful in identifying a set if players whose change of strategies that increase the potential considerably, Lemma~\ref{lem:equilibrium} uses Claims~\ref{claim:small_block},~\ref{claim:stretch}, and~\ref{claim:p-move} to show that every players stays at a $(\rho+\varepsilon)$-equilibrium after the phase in which its strategy is irrevocably decided. 

\begin{lemma}\label{lem:equilibrium}
If no call to routine $\ROUND$() ever fails, the state $\Sout$ is a $(\rho+\varepsilon)$-approximate pure Nash equilibrium.
\end{lemma}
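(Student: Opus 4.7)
The plan is, for each player $j$ in block $B_i$, to prove $u_j(S^m_{-j},s'_j)\leq (\rho+\varepsilon)\,u_j(S^m)$. The argument splits naturally into the easy case $i=m$ and the main case $i<m$. For $i=m$, the final $\LM(\game,S,B_m,\tau)$ call in line~\ref{alg:line:10} ensures that $j$ is already at a $\tau$-approximate equilibrium, and $\tau=\rho+2\varepsilon/3<\rho+\varepsilon$. So the heart of the proof lies in the case $i<m$, which I would handle as follows.

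The key structural point is that once phase $i$ ends, only players in the strictly lower blocks $B_{i+1},\ldots,B_m$ ever move again, so $j$'s strategy stays frozen between $S^i$ and $S^m$, and any discrepancy in $u_j$ comes solely from neighbors of $j$ in those lower blocks. The baseline is the $\tau$-equilibrium at $S^i$ provided by the last $\LM$ call of phase~$i$ (inside the repeat-until loop): $u_j(S^i_{-j},s'_j)\leq \tau\,u_j(S^i)$ together with $u_j(S^i)\geq U_j/(\tau+1)$ by Claim~\ref{claim:potential-increase-by-p-move}. Since $\tau=\rho+2\varepsilon/3$, this baseline already secures the target equilibrium up to a slack of $\varepsilon/3$ in the approximation factor, and the rest of the argument amounts to showing that the perturbations accumulated over phases $i+1,\ldots,m$ do not consume more than this slack.

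To quantify these perturbations, I would write
\begin{equation*}
u_j(S^m_{-j},s'_j)-(\rho+\varepsilon)\,u_j(S^m)\;=\;\bigl[u_j(S^i_{-j},s'_j)-\tau\,u_j(S^i)\bigr]\;-\;\tfrac{\varepsilon}{3}\,u_j(S^i)\;+\;\sum_{\ell=i+1}^{m}\Theta_\ell,
\end{equation*}
where $\Theta_\ell$ collects the signed contributions to both sides from the flips of neighbors $k\in R_\ell$ in phase $\ell$ (using the identity $u_j(S)+u_j(S_{-j},s'_j)=U_j$, which makes the effect of a neighbor's flip on $u_j$ and $u_j(\cdot_{-j},s'_j)$ opposite in sign). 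Since $j\notin R_\ell$ for every $\ell>i$, each edge $(j,k)$ entering $\Theta_\ell$ has its other endpoint $k$ in $R_\ell$ and $j$ outside $R_\ell$, so it falls into exactly one of the categories $H_\ell$, $I_\ell$, $J_\ell$, $K_\ell$ from the definitions preceding Claim~\ref{claim:small_block}, restricted to $E_j$.

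The decisive step is to bound $\Theta_\ell$ by a non-negative linear combination of Claims~\ref{claim:small_block}, \ref{claim:stretch}, and~\ref{claim:p-move} for phase $\ell$. The algebraic identities $\sigma-\rho=\varepsilon/3$ and $\tau-\sigma=\varepsilon/3$ are calibrated so that the stretch inequality (Claim~\ref{claim:stretch}, featuring $\sigma$) and the $\tau$-move inequality (Claim~\ref{claim:p-move}, featuring $\tau$) combine with suitable positive multipliers to dominate the $H_\ell\cap E_j$ and $I_\ell\cap E_j$ contributions in $\Theta_\ell$, while the remaining $J_\ell\cap E_j$ and $K_\ell\cap E_j$ contributions are absorbed by the bound $n\,W_{\ell+1}$ of Claim~\ref{claim:small_block}. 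Summing over $\ell=i+1,\ldots,m$ gives a geometric series with common ratio $1/\Delta$, whose total is dominated by the slack $(\varepsilon/3)\,u_j(S^i)\geq \varepsilon U_j/(3(\tau+1))$ thanks to the choice $\Delta=480n/\varepsilon^2$. The main obstacle will be identifying the exact multipliers for the three claims and tracking the sign conventions carefully; matching the coefficient on each edge category on both sides, while keeping the combination non-negative, is the delicate combinatorial bookkeeping that forces the parameter relations already baked into $\rho$, $\sigma$, $\tau$, and $\Delta$.
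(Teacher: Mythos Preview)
Your plan is essentially the paper's proof: it too uses the $\tau$-equilibrium at $S^i$ as baseline, bounds the later-phase damage via $\sum_{\ell>i}(\wgt(H_\ell)+\wgt(J_\ell))$ by summing Claims~\ref{claim:small_block}, \ref{claim:stretch}, \ref{claim:p-move} with multipliers $\tau,1,1$ (yielding $\wgt(H_\ell)+\wgt(J_\ell)\le 9nW_{\ell+1}/\varepsilon$), and absorbs the resulting geometric series into the $\varepsilon/3$ slack against $u_j(S^i)\ge U_j/(\tau+1)>W_{i+1}/4$. The only differences are cosmetic---the paper packages the total perturbation as a single quantity $\wgt(M_j)$ rather than your telescoping $\Theta_\ell$'s---and note that in your $\Theta_\ell$ the $I_\ell\cap E_j$ and $K_\ell\cap E_j$ terms enter with a \emph{helpful} negative sign (they increase $u_j$), so they can simply be dropped rather than ``dominated.''
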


\begin{proof}
Let $j$ be a player of block $B_i$. Notice that, if $i=m$, player $j$ has no $\tau$-move (and, hence, no $(\rho+\varepsilon)$-move) to make after the execution of the last phase. So, in the following, we assume that $0<i\leq m-1$. 

Let $M_j$ be the set of edges in $\cut_j(S^i)$ whose other endpoint corresponds to a player in blocks $B_{i+1}$, ..., $B_{m-1}$. We will first bound $\wgt(M_j)$. Notice that $M_j\subseteq \cup_{t>i}{\left(H_t \cup J_t\right)}$. Hence,
\begin{align}\label{eq:H-J}
    \wgt(M_j) &\leq \sum_{t>i}{(\wgt(H_t)+\wgt(J_t)}).
\end{align}
Summing the inequalities in the statements of Claims~\ref{claim:small_block},~\ref{claim:stretch}, and~\ref{claim:p-move}, after multiplying the inequality in Claim~\ref{claim:small_block} by $\tau$, we get
\begin{align*}
    &(\tau-\sigma)\cdot \wgt(A_t)+\tau\cdot \wgt(C_t)+(2\tau-\sigma-1)\cdot \wgt(D_t)+(\tau+1)\cdot \wgt(F_t)\\
    &+(\tau-\sigma)\cdot \wgt(H_t)+2\tau\cdot \wgt(J_t) \leq  \tau\cdot n\cdot W_{t+1}.
\end{align*}
Notice that the LHS is lower-bounded by $\frac{\varepsilon}{3} \cdot (\wgt(H_t)+\wgt(J_t))$. Thus,
\begin{align*}
    \wgt(H_t)+\wgt(J_t) &\leq 9n\cdot W_{t+1}/\varepsilon.
\end{align*}
Now, inequality (\ref{eq:H-J}) yields
\begin{align}\label{eq:bound-on-M_j}
    \wgt(M_j) &\leq \frac{9n}{\varepsilon}\cdot \sum_{t>i}{W_{t+1}}\leq  \frac{9n}{\varepsilon}\cdot W_{i+2} \sum_{t=i+1}^\infty{\Delta^{i+1-t}}\leq 10n\cdot W_{i+2}/\varepsilon.
\end{align}

Now, notice that the edges in $M_j$ are those which can be removed from the cut due to moves by players in phases $i+1$, ..., $m$, and thus decrease the utility of player $j$ by up to $\wgt(M_j)$. I.e., 
\begin{align}\label{eq:decrease-in-utility}
u_j(S^{m})&\geq u_j(S^i)-\wgt(M_j).
\end{align}
Furthermore, the edges in $M_j$ are those which are not included in $\cut_j(S^i_{-j},s'_j)$ and thus do not contribute to the utility player $j$ would have by changing her strategy. Moves by players in phases $i+1$, ..., $m$ may increase the utility player $j$ would have by deviating by at most $\wgt(M_j)$, i.e., \begin{align}\label{eq:increase-in-utility-by-deviating}
    u_j(S^i_{-j},s'_j) &\geq u_j(S^{m}_{-j},s'_j)-\wgt(M_j).
\end{align}

We are ready to prove that player $j$ has no $(\rho+\varepsilon)$-move to make after the execution of the last phase of the algorithm, using inequalities (\ref{eq:bound-on-M_j}), (\ref{eq:decrease-in-utility}), and (\ref{eq:increase-in-utility-by-deviating}), and the fact that player $j$ had no $\tau$-move to make at the end of phase $i$.
We have
\begin{align}\nonumber
    (\rho+\varepsilon)\cdot u_j(S^{m}) &\geq (\rho+\varepsilon)\cdot u_j(S^i)-10(\rho+\varepsilon)n\cdot W_{i+2}/\varepsilon\\\nonumber
    &\geq \tau\cdot u_j(S^i)-30n\cdot  W_{i+2}/\varepsilon+(\rho+\varepsilon-\tau)\cdot u_j(S^i)\\\nonumber
    &\geq u_j(S^i_{-j},s'_j)-30n\cdot W_{i+2}/\varepsilon+\frac{\varepsilon}{3}\cdot u_j(S^i)\\\label{eq:equilibrium-eq1}
    &\geq u_j(S^{m}_{-j},s'_j)-40n\cdot W_{i+2}/\varepsilon+\frac{\varepsilon}{3}\cdot u_j(S^i).
\end{align}
Now, since player $j$ has no $\tau$-move at state $S^i$, Claim~\ref{claim:potential-increase-by-p-move} implies that $u_j(S^i)\geq W_{i+1}/4$ and, by the definition of the blocks, $u_j(S^i)\geq 120n\cdot W_{i+2}/\varepsilon^2$. Then, equation (\ref{eq:equilibrium-eq1}) yields $(\rho+\varepsilon)\cdot u_j(S^{m})\geq u_j(S^{m}_{-j},s'_j)$, as desired.
\end{proof}

Also, assuming that all calls of routine $\ROUND$() are successful, Lemma~\ref{lem:running-time} uses Claims~\ref{claim:small_block},~\ref{claim:stretch}, and~\ref{claim:p-move} to bound the total increase of the potential $\Phi_{R_i}$ in phase $i$ and Claim~\ref{claim:potential-increase-by-p-move} and Lemma~\ref{lem:potential-increase-by-rounding} to show that this increase happens after a small number of local and global moves.

\begin{lemma}\label{lem:running-time}
If no call to routine $\ROUND$() ever fails, the algorithm terminates after performing at most $\bigO(n^2/\varepsilon)$ $\tau$-moves, at most $\bigO(n^3/\varepsilon^4)$ calls to routine $\SDP$(), and at most $\frac{414720n^3}{\varepsilon^4}$ calls to routine $\ROUND$().
\end{lemma}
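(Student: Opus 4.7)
The strategy is to exploit the monotone growth of the global potential $\Phi$ and charge each action against a guaranteed amount of potential contribution. Under the hypothesis that no $\ROUND$() call ever fails, by Claim~\ref{claim:potential-increase-by-p-move} a $\tau$-move by player $j$ raises $\Phi$ by more than $\frac{\tau-1}{\tau+1} U_j$, and by Lemma~\ref{lem:potential-increase-by-rounding} each $\ROUND$() call in phase $i$ raises $\Phi$ by at least $\frac{\varepsilon}{48} W_{i+2}$.

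The key estimate will be an upper bound of $O(n W_{i+1}/\varepsilon)$ on $\Phi(S^i) - \Phi(S^{i-1})$. I will derive it by summing $\tau\cdot{}$Claim~\ref{claim:small_block} with Claim~\ref{claim:stretch} and Claim~\ref{claim:p-move}, exactly as in the computation that opens the proof of Lemma~\ref{lem:equilibrium}. That sum has non-negative coefficients on each of $\wgt(A_i)$, $\wgt(C_i)$, $\wgt(D_i)$, $\wgt(F_i)$, $\wgt(H_i)$, $\wgt(J_i)$ with right-hand side $\tau n W_{i+1}$, so each of these edge-weight sums is $O(n W_{i+1}/\varepsilon)$. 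Substituting into Claim~\ref{claim:stretch} bounds $\wgt(I_i)$ by the same order, and Claim~\ref{claim:small_block} directly gives $\wgt(K_i) \le n W_{i+1}$; since $\Phi(S^i) - \Phi(S^{i-1}) = \wgt(I_i) + \wgt(K_i) - \wgt(H_i) - \wgt(J_i)$, the claimed phase-$i$ bound follows.

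Dividing by the minimum per-action gain and summing over the at most $n$ non-empty blocks then yields each of the three counts. Heavy-block $\tau$-moves (by $j \in B_i$) gain $\Omega(W_{i+1})$ each, giving $O(n/\varepsilon)$ per phase and $O(n^2/\varepsilon)$ in total; the light-block $\tau$-moves performed inside $\GM$() are amortized against the next phase's budget, where those same players become heavy. Each $\ROUND$() call gains at least $\frac{\varepsilon}{48} W_{i+2}$, giving a per-phase count of $O(n W_{i+1}/(\varepsilon^2 W_{i+2})) = O(n\Delta/\varepsilon^2)$; with $\Delta = 480 n/\varepsilon^2$, summation over the non-empty phases produces the explicit bound $\frac{414720 n^3}{\varepsilon^4}$. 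For $\SDP$() calls, each $\GM$() invocation issues one before its while-loop plus one per iteration (each iteration paired with a $\ROUND$()), and the number of $\GM$() invocations is bounded by the number of main-loop iterations, each witnessed by a $\tau$-move or a $\ROUND$(); the $\SDP$() count is therefore of the same order as the $\ROUND$() count, namely $O(n^3/\varepsilon^4)$.

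The principal obstacle is the amortization of light-block $\tau$-moves inside $\GM$(): a naive division of the phase-$i$ potential budget by the light-block per-move gain $\Omega(W_{i+2})$ yields $O(n\Delta/\varepsilon) = O(n^2/\varepsilon^3)$ moves per phase, exceeding the target by a factor of $n/\varepsilon^2$. Resolving this requires charging these moves to the subsequent phase, in which the same players are heavy and the phase potential budget of $O(n W_{i+2}/\varepsilon)$ admits the needed $O(n/\varepsilon)$ per-phase contribution against the per-move gain $\Omega(W_{i+2})$.
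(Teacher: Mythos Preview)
Your overall plan matches the paper's: bound the per-phase potential increase via a linear combination of Claims~\ref{claim:small_block}--\ref{claim:p-move}, then divide by the guaranteed per-action gain. For the potential bound you reuse the combination $\tau\cdot\text{Claim~\ref{claim:small_block}}+\text{Claim~\ref{claim:stretch}}+\text{Claim~\ref{claim:p-move}}$ from the proof of Lemma~\ref{lem:equilibrium} and then back-substitute via Claim~\ref{claim:stretch} to control $\wgt(I_i)$; the paper instead takes multipliers $\frac{\sigma(\tau-1)}{\tau-\sigma},\frac{\tau-1}{\tau-\sigma},\frac{\sigma-1}{\tau-\sigma}$, chosen so that the $I_i$ coefficient comes out to $+1$ and the $H_i$ coefficient to $-1$ directly, yielding $\Phi_{R_i}(S^i)-\Phi_{R_i}(S^{i-1})\le 18nW_{i+1}/\varepsilon$ in one line. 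Either route works. Your counts for $\ROUND$() and $\SDP$() are correct and agree with the paper's.

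The one genuine gap is your amortization of light-block $\tau$-moves. Charging a light $\tau$-move made in phase $i$ against phase $i{+}1$'s budget is double-counting: that move raises $\Phi$ strictly between $S^{i-1}$ and $S^i$, so it consumes the phase-$i$ budget $\Phi(S^i)-\Phi(S^{i-1})$, and contributes nothing to $\Phi(S^{i+1})-\Phi(S^i)$. There is no valid transfer, and the naive $O(n\Delta/\varepsilon)=O(n^2/\varepsilon^3)$ per-phase bound you yourself compute is in fact what one gets for light moves.

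This is not actually an obstacle, because the $O(n^2/\varepsilon)$ claim is meant for the heavy $\tau$-moves only. The paper's per-move lower bound $W_{i+1}/4$ holds only for players in $B_i$; the $\LM(\game,S,B_{i+1},\tau)$ calls inside $\GM$() are treated as internal to that routine. The $\SDP$()-count argument needs only the heavy count: from the second repeat--until iteration onward, $ch1=\FALSE$ forces $ch2=\FALSE$ (the state entering $\GM$() is then unchanged from the previous $\GM$() exit, so the initial $\LM$ and $\SDP$ inside it do nothing new), hence every iteration but at most two is witnessed by a heavy $\tau$-move. So drop the amortization paragraph and restrict the $\tau$-move count to the heavy block.
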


\begin{proof}
We will first bound the increase in the potential $\Phi_{R_i}$ of phase $i>0$, proving that
\begin{align}\label{eq:pot-increase}
\Phi_{R_i}(S^i)-\Phi_{R_i}(S^{i-1}) &\leq 18n \cdot W_{i+1}/\varepsilon.
\end{align}
Observe that $\cut_{R_i}(S^{i-1})=A_i\cup C_i \cup D_i \cup H_i\cup J_i$ and $\cut_{R_i}(S^{i})=A_i\cup C_i \cup D_i \cup I_i\cup K_i$. Hence, $\Phi_{R_i}(S^{i-1})=\wgt(A_i)+\wgt(C_i)+\wgt(D_i)+\wgt(H_i)+\wgt(J_i)$, $\Phi_{R_i}(S^i)=\wgt(A_i)+\wgt(C_i)+\wgt(D_i)+\wgt(I_i)+\wgt(K_i)$ and, thus,
\begin{align}\label{eq:pot-increase-eq1}
    \Phi_{R_i}(S^i)-\Phi_{R_i}(S^{i-1}) &= -\wgt(H_i)+\wgt(I_i)-\wgt(J_i)+ \wgt(K_i)
\end{align}
We now multiply the inequalities in Claims~\ref{claim:small_block},~\ref{claim:stretch}, and~\ref{claim:p-move} by $\frac{\sigma(\tau-1)}{\tau-\sigma}$, $\frac{\tau-1}{\tau-\sigma}$, and $\frac{\sigma-1}{\tau-\sigma}$, respectively, and sum them up. We obtain 
\begin{align*}
    &\frac{\sigma(\tau-1)}{\tau-\sigma}\cdot \wgt(C_i)+\frac{(\sigma-1)(\tau-1)}{\tau-\sigma}\cdot \wgt(D_i)+\frac{(\sigma+1)(\tau-1)}{\tau-\sigma}\cdot \wgt(F_i)\\
    &-\wgt(H_i)+\wgt(I_i)+\frac{2\tau\sigma-\tau-\sigma}{\tau-\sigma}\cdot \wgt(J_i)+\wgt(K_i) \leq  \frac{\sigma(\tau-1)}{\tau-\sigma}\cdot n\cdot W_{i+1}\leq 18n\cdot W_{i+1}/\varepsilon.
\end{align*}
The proof of inequality (\ref{eq:pot-increase}) follows by observing that all coefficients in the RHS of this last inequality besides that of term $\wgt(H_i)$ are positive and thus the RHS is lower-bounded by the LHS of (\ref{eq:pot-increase-eq1}).

By Claim~\ref{claim:potential-increase-by-p-move}, we have that a $\tau$-move increases the potential by at least $W_{i+1}/4$. By the bound on the potential increase in the whole phase $i$ from (\ref{eq:pot-increase}), we obtain that no more than $72n/\varepsilon$ $\tau$-moves take place in phase $i$. The bound on the total number of $\tau$-moves follows since there are at most $n$ phases containing some player.

Similarly, by Lemma~\ref{lem:potential-increase-by-rounding}, we have that a successful call to routine $\ROUND(\game,S,B_{i+1})$ at phase $i>0$ increases the potential by at least $\varepsilon \cdot W_{i+2}/48$. By the bound on the potential increase in the whole phase $i$ from (\ref{eq:pot-increase}) and the relation $W_{i+1}/W_{i+2}=\Delta=480n/\varepsilon^2$, we obtain that no more than $414720n^2/\varepsilon^4$ calls to routine $\ROUND(\game,S,B_{i+1})$ take place in phase $i$. Again, the bound on the total number of callls to $\ROUND$() follows since there are at most $n$ phases containing some player.

Regarding the number of calls to routine $\SDP()$ within a phase, we claim that this is at most the number of $\tau$-moves plus twice the number of calls to routine $\ROUND()$ plus $1$. Indeed, in each execution of the routine $\GM()$ in which $\ROUND()$ is executed $t>0$ times, there are at most $t+1\leq 2t$ executions of $\SDP()$. Also, in each execution of the routing $\GM()$ in which $\ROUND()$ is not executed, the number of executions of $\GM()$ (and, consequently, of $\SDP()$) is at most the number of $\tau$-moves, besides in the last round of the phase, where we do not have any $\tau$-move but there is one $\SDP()$ call. Hence, a bound of $\bigO(n^2/\varepsilon^4)$ on the number of $\SDP()$ calls in a phase follows
by the bounds on the number of $\tau$-moves and $\ROUND()$ calls in a phase. The bound of $\bigO(n^3/\varepsilon^4)$ on the total number of executions of $\SDP()$ follows since there are at most $n$ phases containing some player.
\end{proof}

To complete the proof of Theorem~\ref{thm:main} after having proved Lemmas~\ref{lem:equilibrium} and~\ref{lem:running-time}, it suffices to show that the probability that a call to routine $\ROUND$() fails to identify a set of players whose change of strategies increase the potential considerably is at most $1/n$. This can be done by a simple application of the union bound: there are at most $\frac{414720n^3}{\varepsilon^4}$ executions of routine $\ROUND$() and, by Lemma~\ref{lem:potential-increase-by-rounding}, each of them fails with probability at most $\frac{\varepsilon^4}{414720n^4}$.

\section{Properties of randomized rounding}\label{sec:xor-proofs}
In this section, we prove the two inequalities of Lemma~\ref{lem:xor}.
Let $\theta_j, \theta_k\in [0,\pi]$ be the angles vectors $v_j$ and $v_k$ form with vector $\widehat{v}$ and $\phi$ be the rotation angle vector $v_k$ forms with the two-dimensional plane defined by vectors $v_j$ and $\widehat{v}$. Without loss of generality, we can assume that $\theta_j\geq \theta_k$ and, furthermore, that vectors $\widehat{v}$, $v_j$, and $v_k$ are lying in three dimensions and are defined as $\widehat{v}=(1,0,0)$, $v_j=(\cos{\theta_j},\sin{\theta_j},0)$, and $v_k=(\cos{\theta_k},\sin{\theta_k}\cdot \cos{\phi},\sin{\theta_k}\cdot \sin{\phi})$. Denoting the rotation function by $f$ (i.e., $f(\theta)=\frac{\pi}{2}\cdot (1-\cos{\theta})$ for $\theta\in [0,\pi]$), we have that the rotated vectors of $v_j$ and $v_k$ are $v'_j=(\cos{f(\theta_j)},\sin{f(\theta_j)},0)$ and $v'_k=(\cos{f(\theta_k)},\sin{f(\theta_k)}\cdot \cos{\phi},\sin{f(\theta_k)}\cdot \sin{\phi})$, respectively. These definitions yield
\begin{align}\label{eq:prod-of-rotated-vecs}
    v'_j \cdot v'_k &= \cos{f(\theta_j)}\cdot \cos{f(\theta_k)}+\sin{f(\theta_j)}\cdot \sin{f(\theta_k)}\cdot \cos{\phi}
\end{align}
and
\begin{align}\label{eq:prod-of-vecs}
    v_j \cdot v_k &= \cos{\theta_j}\cdot \cos{\theta_k}+\sin{\theta_j}\cdot \sin{\theta_k}\cdot \cos{\phi}.
\end{align}
The probability that a random hyperplane separates the rotated vectors $v'_j$ and $v'_k$ is proportional to the angle between them, i.e., $\Pr[\xor(j,k)]=\frac{\arccos{(v'_j\cdot v'_k)}}{\pi}$. 

\subsection{Proof of inequality~(\ref{eq:xor-lower}) in Lemma~\ref{lem:xor}} 
To prove inequality (\ref{eq:xor-lower}), it suffices to prove that
\begin{align*}
    \frac{\arccos{(v'_j\cdot v'_k)}}{\pi} &\geq \frac{1-v_j\cdot v_k}{4},
\end{align*}
or, equivalently, 
\begin{align}\label{eq:main-ineq}
    v'_j\cdot v'_k - \cos{\left(\frac{\pi}{4}\cdot (1-v_j\cdot v_k)\right)} &\leq 0.
\end{align}
By (\ref{eq:prod-of-rotated-vecs}) and (\ref{eq:prod-of-vecs}), and by setting $g(x,y,\phi)=\cos{x}\cdot \cos{y}+\sin{x}\cdot \sin{y}\cdot \cos{\phi}$, the LHS of (\ref{eq:main-ineq}) becomes
\begin{align*}
    g(f(\theta_j),f(\theta_k),\phi)-\cos{\left(\frac{\pi}{4}\cdot (1-g(\theta_j,\theta_k,\phi))\right)}.
\end{align*}
Observe that $g$ is a linear function in $\cos{\phi}$, taking values in $[-1,1]$. To see why, observe that $g(x,y,\phi)\leq g(x,y,0)=\cos{(x-y)}$ and $g(x,y,\phi)\geq g(x,y,\pi)=\cos{(x+y)}$. Now the derivative of the LHS of (\ref{eq:main-ineq}) with respect to $\cos{\phi}$ is 
\begin{align*}
    \frac{\partial{g(f(\theta_j),f(\theta_k),\phi)}}{\partial{\cos{\phi}}}-\frac{\partial{g(\theta_j,\theta_k,\phi)}}{\partial{\cos{\phi}}}\cdot \frac{\pi}{4}\cdot \sin{\left(\frac{\pi}{4}\cdot (1-g(\theta_j,\theta_k,\phi))\right)},
\end{align*}
which is increasing in $\cos{\phi}$. To see why, observe that the derivative of function $g(x,y,\phi)$ with respect to $\cos{\phi}$ is equal to $\sin{x}\cdot \sin{y}$, which is non-negative when $x,y\in [0,\pi]$, and the quantity inside the $\sin$ is decreasing in $\cos{\phi}$, taking values between $0$ and $\pi/2$. We conclude that the LHS of (\ref{eq:main-ineq}) is convex in $\cos{\phi}$. Thus, it is maximized when either $\cos{\phi}=1$ or $\cos{\phi}=-1$ and it suffices to prove inequality (\ref{eq:main-ineq}) for $\cos{\phi}\in \{-1,1\}$. For $\cos{\phi}=1$, we have $g(x,y,\phi)=\cos{(x-y)}$, the LHS of (\ref{eq:main-ineq}) becomes
\begin{align*}
\cos{(f(\theta_j)-f(\theta_k))} - \cos{\left(\frac{\pi}{4}\cdot (1-\cos{(\theta_j-\theta_k)})\right)},
\end{align*}
and (\ref{eq:main-ineq}) is equivalent to
\begin{align}\label{eq:xor-lower-first-ineq}
    |f(\theta_j)-f(\theta_k)| - \frac{\pi}{4}\cdot (1-\cos{(\theta_j-\theta_k)})&\geq 0.
\end{align}
For $\cos{\phi}=-1$, we have $g(x,y,\phi)=\cos{(x+y)}$, the LHS of (\ref{eq:main-ineq}) becomes
\begin{align*}
\cos{(f(\theta_j)+f(\theta_k))} - \cos{\left(\frac{\pi}{4}\cdot (1-\cos{(\theta_j+\theta_k)})\right)},
\end{align*}
and (\ref{eq:main-ineq}) is equivalent to
\begin{align}\label{eq:xor-lower-second-ineq}
    f(\theta_j)+f(\theta_k) - \frac{\pi}{4}\cdot (1-\cos{(\theta_j+\theta_k)})&\geq 0.
\end{align}

We will complete the proof by showing that both (\ref{eq:xor-lower-first-ineq}) and (\ref{eq:xor-lower-second-ineq}) are true for the particular rotation function $f(\theta)=\frac{\pi}{2}\cdot (1-\cos{(\theta)})$ we use. Without loss of generality, assume that $0\leq \theta_k\leq \theta_j=\theta_k+x\leq \pi$ with $x\in [0,\pi-\theta_k]$. Then, using our specific rotation function, the LHS of inequality (\ref{eq:xor-lower-first-ineq}) can be written as
\begin{align*}
    \frac{\pi}{4}\cdot \left(2\cos{\theta_k}-2\cos{(\theta_k+x)}-1+\cos{x}\right) &= \frac{\pi}{2}\cdot \left(2\sin{\left(\theta_k+\frac{x}{2}\right)}\cdot \sin{\frac{x}{2}}-\sin^2{\frac{x}{2}}\right)\\
    &\geq \frac{\pi}{2}\cdot \sin^2{\frac{x}{2}}\geq 0,
\end{align*}
which proves inequality (\ref{eq:xor-lower-first-ineq}). The first inequality follows since our assumption for $\theta_k$ and $x$ imply that $\frac{x}{2}\leq \theta_k+\frac{x}{2}\leq \pi-\frac{x}{2}$, which yields $\sin{\left(\theta_k+\frac{x}{2}\right)}\geq \sin{\frac{x}{2}}$.

The LHS of inequality (\ref{eq:xor-lower-second-ineq}) becomes
\begin{align*}
    &\frac{\pi}{2}\cdot \left(2-\cos{(\theta_k+x)}-\cos{\theta_k}\right) - \frac{\pi}{4}\cdot \left(1-\cos{(2\theta_k+x)}\right)\\ &=\frac{\pi}{2}\cdot \left(1-2\cos{\left(\theta_k+\frac{x}{2}\right)}\cdot \cos{\frac{x}{2}}+\cos^2{\left(\theta_k+\frac{x}{2}\right)}\right).
\end{align*}
Since $\cos{\frac{x}{2}}$ takes values in $[-1,1]$, the quantity inside this last parenthesis is between the values $\left(1-\cos{\left(\theta_k+\frac{x}{2}\right)}\right)^2$ and $\left(1+\cos{\left(\theta_k+\frac{x}{2}\right)}\right)^2$, which are both clearly non-negative. The proof of inequality (\ref{eq:xor-lower-second-ineq}) follows. \qed

\subsection{Proof of inequality~(\ref{eq:xor-upper}) in Lemma~\ref{lem:xor}}
To prove inequality (\ref{eq:xor-upper}), it suffices to prove that
\begin{align}\label{eq:xor-upper-to-prove}
    \frac{\arccos{(v'_j\cdot v'_k)}}{\pi}-\frac{1}{8}\cdot \left(3\rho-1+(\rho-1)\cdot v_j\cdot\widehat{v}+(\rho-1)\cdot v_k\cdot \widehat{v}-(\rho+1)\cdot v_j\cdot v_k\right) \leq 0.
\end{align}
We have verified (\ref{eq:xor-upper-to-prove}) in its general form by extensive numerical computations using Mathematica (version 13.0) and explain how we do so later in this section. Until then, we present a formal analysis for the simpler case where the vectors $v_j$, $v_k$, and $\widehat{v}$ lie on the same $2$-dimensional plane. This includes the values of parameters $\theta_j$, $\theta_k$, and $\phi$ for which inequality (\ref{eq:xor-upper-to-prove}) is tight and has allowed us to compute the approximation factor $\rho$ explicitly, as a function of a quadratic equation. We distinguish between two cases, depending on whether $\phi$ is equal to $0$ or $\pi$.

\paragraph{Formal proof of inequality (\ref{eq:xor-upper-to-prove}) when $\phi=0$.} In this case, equations (\ref{eq:prod-of-rotated-vecs}) and (\ref{eq:prod-of-vecs}) yield 
\begin{align*}
    v'_j\cdot v'_k &= \cos{(f(\theta_j)-f(\theta_k))}=\cos{\left(\frac{\pi}{2}\cdot (\cos{\theta_k}-\cos{\theta_j})\right)}.
\end{align*}
and $v_j\cdot v_k=\cos{(\theta_j-\theta_k)}$, respectively, and the LHS of (\ref{eq:xor-upper-to-prove}) becomes
\begin{align}\nonumber
    &\frac{1}{2}\cdot (\cos{\theta_k}-\cos{\theta_j})-\frac{3\rho-1}{8}-\frac{\rho-1}{8}\cdot \left(\cos{\theta_j}+\cos{\theta_k}\right)+\frac{\rho+1}{8}\cdot \cos{(\theta_j-\theta_k)}\\\nonumber
    &= -\frac{\rho}{2}+\sin{\left(\frac{\theta_j-\theta_k}{2}\right)}\cdot \sin{\left(\frac{\theta_j+\theta_k}{2}\right)}-\frac{\rho-1}{4}\cdot \cos{\left(\frac{\theta_j-\theta_k}{2}\right)}\cdot \cos{\left(\frac{\theta_j+\theta_k}{2}\right)}\\\label{eq:xor-upper-phi=0-last-step}
    &\quad\, +\frac{\rho+1}{4}\cdot \cos^2{\left(\frac{\theta_j-\theta_k}{2}\right)}.
\end{align}
Applying the inequality $x\cdot y \leq c\cdot x^2+\frac{1}{4c}\cdot y^2$ for $c=\frac{1+\sqrt{13}}{4}$, $x=\sin{\left(\frac{\theta_j-\theta_k}{2}\right)}$, and $y=\sin{\left(\frac{\theta_j+\theta_k}{2}\right)}$, we obtain 
\begin{align*}\nonumber
    &\sin{\left(\frac{\theta_j-\theta_k}{2}\right)}\cdot \sin{\left(\frac{\theta_j+\theta_k}{2}\right)} \leq \frac{1+\sqrt{13}}{4}\cdot \sin^2{\left(\frac{\theta_j-\theta_k}{2}\right)}+\frac{\sqrt{13}-1}{12}\cdot \sin^2{\left(\frac{\theta_j+\theta_k}{2}\right)}\\
    &= \frac{1+2\sqrt{13}}{6}-\frac{1+\sqrt{13}}{4}\cdot \cos^2{\left(\frac{\theta_j-\theta_k}{2}\right)}-\frac{\sqrt{13}-1}{12}\cdot \cos^2{\left(\frac{\theta_j+\theta_k}{2}\right)}.
\end{align*}
Using this inequality and the fact that $\rho=\frac{1+2\sqrt{13}}{3}$, the RHS of equation (\ref{eq:xor-upper-phi=0-last-step}) is at most
\begin{align*}
    &-\frac{\rho}{2}+\frac{1+2\sqrt{13}}{6}-\frac{\sqrt{13}-1}{12}\cdot \cos^2{\left(\frac{\theta_j+\theta_k}{2}\right)}-\frac{\rho-1}{4}\cdot \cos{\left(\frac{\theta_j+\theta_k}{2}\right)}\cdot\cos{\left(\frac{\theta_j-\theta_k}{2}\right)}\\
    &-\left(\frac{1+\sqrt{13}}{4}-\frac{\rho+1}{4}\right)\cdot \cos^2{\left(\frac{\theta_j-\theta_k}{2}\right)} = -\frac{\sqrt{13}-1}{12}\cdot \left(\cos{\left(\frac{\theta_j+\theta_k}{2}\right)}+\cos{\left(\frac{\theta_j-\theta_k}{2}\right)}\right)^2,
\end{align*}
which is clearly non-positive, completing the proof.

We remark that this last RHS expression is equal to $0$ only when $\theta_j=\pi$. In this case, the RHS of (\ref{eq:xor-upper-phi=0-last-step}) becomes 
\begin{align*}
    -\frac{\rho}{2}+\cos^2{\left(\frac{\theta_k}{2}\right)}+\frac{\rho-1}{4}\cdot \sin^2{\left(\frac{\theta_k}{2}\right)}+\frac{\rho+1}{4}\cdot \sin^2{\left(\frac{\theta_k}{2}\right)}=\frac{2-\rho}{2}\cdot \left(1-\sin^2{\left(\frac{\theta_k}{2}\right)}\right),
\end{align*}
implying that $(\pi,\pi)$ is the only pair of values for $\theta_j$ and $\theta_k$ for which inequality (\ref{eq:xor-upper-to-prove}) is tight.

\paragraph{Formal proof of inequality (\ref{eq:xor-upper-to-prove}) when $\phi=\pi$.} Then, equations (\ref{eq:prod-of-rotated-vecs}) and (\ref{eq:prod-of-vecs}) yield 
\begin{align*}
    v'_j\cdot v'_k &= \cos{(f(\theta_j)+f(\theta_k))}=\cos{\left(\pi-\frac{\pi}{2}\cdot (\cos{\theta_j}+\cos{\theta_k})\right)}.
\end{align*}
and $v_j\cdot v_k=\cos{(\theta_j+\theta_k)}$, respectively. Notice that 
\begin{align*}
    \frac{\arccos{(v'_j\cdot v'_k)}}{\pi} &=1-\frac{1}{2}\cdot |\cos{\theta_j}+\cos{\theta_k}|.
\end{align*}
So, we distinguish between two subcases, depending on whether $\cos{\theta_j}+\cos{\theta_k}$ is positive or not (and, consequently, on whether $\theta_j+\theta_k$ is lower than $\pi$ or not).

If $\cos{\theta_j}+\cos{\theta_k}> 0$ (and $\theta_j+\theta_k<\pi$), then $\frac{\arccos{(v'_j\cdot v'_k)}}{\pi}=1-\frac{1}{2}\cdot (\cos{\theta_j}+\cos{\theta_k})$ and the LHS of inequality (\ref{eq:xor-upper-to-prove}) becomes
\begin{align*}
    &\frac{9-3\rho}{8}-\frac{\rho+3}{8}\cdot (\cos{\theta_j}+\cos{\theta_k})+\frac{\rho+1}{8}\cdot \cos{(\theta_j+\theta_k)}\\
    &= \frac{2-\rho}{2}-\frac{\rho+3}{4}\cdot \cos{\left(\frac{\theta_j-\theta_k}{2}\right)}\cdot \cos{\left(\frac{\theta_j+\theta_k}{2}\right)}+\frac{\rho+1}{4}\cos^2{\left(\frac{\theta_j+\theta_k}{2}\right) }\\
    &\leq \frac{2-\rho}{2}-\frac{1}{2}\cdot \cos^2{\left(\frac{\theta_j+\theta_k}{2}\right)} < 0.
\end{align*}
The first inequality follows since $\cos{\left(\frac{\theta_j+\theta_k}{2}\right)}> 0$ and $\theta_k\geq 0$ which implies  $\cos{\left(\frac{\theta_j-\theta_k}{2}\right)}\geq  \cos{\left(\frac{\theta_j+\theta_k}{2}\right)}$. 

Otherwise, if $\cos{\theta_j}+\cos{\theta_k}\leq 0$ (and $\theta_j+\theta_k\geq \pi$), then $\frac{\arccos{(v'_j\cdot v'_k)}}{\pi}=1+\frac{1}{2}\cdot (\cos{\theta_j}+\cos{\theta_k})$ and the LHS of inequality (\ref{eq:xor-upper-to-prove}) becomes
\begin{align*}
    &\frac{9-3\rho}{8}+\frac{5-\rho}{8}\cdot (\cos{\theta_j}+\cos{\theta_k})+\frac{\rho+1}{8}\cdot \cos{(\theta_j+\theta_k)}\\
    &= \frac{2-\rho}{2}+\frac{5-\rho}{4}\cdot \cos{\left(\frac{\theta_j-\theta_k}{2}\right)}\cdot \cos{\left(\frac{\theta_j+\theta_k}{2}\right)}+\frac{\rho+1}{4}\cos^2{\left(\frac{\theta_j+\theta_k}{2}\right) }\\
    &\leq \frac{2-\rho}{2}+\frac{\rho-2}{2}\cdot \cos^2{\left(\frac{\theta_j+\theta_k}{2}\right)} \leq 0.
\end{align*}
The first inequality follows since $\cos{\left(\frac{\theta_j+\theta_k}{2}\right)}\leq 0$ and $\cos{\left(\frac{\theta_j-\theta_k}{2}\right)}\geq - \cos{\left(\frac{\theta_j+\theta_k}{2}\right)}$. To see why this last inequality is true, observe that the assumption $\theta_j\in [0,\pi]$ implies that $\frac{\theta_j-\theta_k}{2}\leq \pi-\frac{\theta_j+\theta_k}{2}$ and, hence, $\cos{\left(\frac{\theta_j-\theta_k}{2}\right)}\geq \cos{\left(\pi-\frac{\theta_j+\theta_k}{2}\right)}=-\cos{\left(\frac{\theta_j+\theta_k}{2}\right)}$.

Notice that, again, the last inequality is tight only for the pair of values $(\pi,\pi)$ for $\theta_j$ and $\theta_k$.

\paragraph{Verifying inequality (\ref{eq:xor-upper-to-prove}) numerically.} We now consider the general case. Define the function
\begin{align*}
    \lambda(x,y,z) &= \frac{\arccos{\left(\cos{f(x)}\cdot \cos{f(y)}+z\cdot \sin{f(x)}\cdot \sin{f(y)}\right)}}{\pi}-\frac{3\rho-1}{8}-\frac{\rho-1}{8}\cdot \cos{x}\\
    &\quad-\frac{\rho-1}{8}\cdot \cos{y}+\frac{\rho+1}{8}\cdot \cos{x}\cdot \cos{y}+\frac{\rho+1}{8}\cdot z \cdot \sin{x}\cdot \sin{y},
\end{align*}
for $x\in [0,\pi]$, $y\in [0,\pi]$, and $z\in [L,H]$ with $L=\frac{-\cos{f(x)}\cos{f(y)}-1}{\sin{f(x)}\sin{f(y)}}$ and $H=\frac{-\cos{f(x)}\cos{f(y)}+1}{\sin{f(x)}\sin{f(y)}}$. Notice that $[-1,1]$ is a subinterval of $[L,H]$. Now, observe that $\lambda(\theta_j,\theta_k,\cos{\phi})$ is identical to the LHS of inequality (\ref{eq:xor-upper-to-prove}), after substituting $v'_j\cdot v'_k$ and $v_j\cdot v_k$ using (\ref{eq:prod-of-rotated-vecs}) and (\ref{eq:prod-of-vecs}), respectively. The derivative of $\lambda(x,y,z)$ with respect to $z$ is
\begin{align*}
    \frac{\partial{\lambda(x,y,z)}}{\partial{z}} &= -\frac{\sin{f(x)}\cdot \sin{f(y})}{\pi\cdot \sqrt{1-\left(\cos{f(x)}\cdot \cos{f(y)}+z\cdot \sin{f(x)}\cdot \sin{f(y)}\right)^2}}+\frac{\rho+1}{8}\cdot \sin{x}\cdot \sin{y},
\end{align*}
and has two roots:
\begin{align*}
    z_1(x,y) &= -\frac{\cos{f(x)}\cos{f(y)}}{\sin{f(x)}\sin{f(y)}}-\frac{1}{\sin{f(x)}\cdot \sin{f(y)}}\cdot \sqrt{1-\left(\frac{8\sin{f(x)}\sin{f(y)}}{\pi(\rho+1)\cdot \sin{x}\sin{y}}\right)^2}
\end{align*}
and 
\begin{align*}
    z_2(x,y) &= -\frac{\cos{f(x)}\cos{f(y)}}{\sin{f(x)}\sin{f(y)}}+\frac{1}{\sin{f(x)}\cdot \sin{f(y)}}\cdot \sqrt{1-\left(\frac{8\sin{f(x)}\sin{f(y)}}{\pi(\rho+1)\cdot \sin{x}\sin{y}}\right)^2}.
\end{align*}
By inspecting the derivative, we can see that the function $\lambda(x,y,z)$ is decreasing as $z$ takes values in $[L,z_1(x,y)]$, has a local minimum at $z_1(x,y)$, increases as $z$ takes values in $[z_1(x,y),z_2(x,y)]$, has a local maximum at $z_2(x,y)$, and decreases in $[z_2(x,y),H]$. To prove inequality (\ref{eq:xor-upper-to-prove}), it suffices to show that
\begin{align}\label{eq:to-verify-numerically}
\max_{\theta_j,\theta_k,\phi\in [0,\pi]}{\lambda(\theta_j,\theta_k,\cos{\phi}})\leq 0.
\end{align}
Given the formal bounds on $\lambda(\theta_j,\theta_k,1)$ and $\lambda(\theta_j,\theta_k,-1)$ in our analysis above, we complete the proof of (\ref{eq:to-verify-numerically}) by showing through extensive numerical computations using Mathematica that the quantity
\begin{align}\label{eq:nullify-deriv}
\max_{\substack{\theta_j,\theta_k\in [0,\pi]:\\z_2(\theta_j,\theta_k)\in [-1,1]}}{\lambda(\theta_j,\theta_k,z_2(\theta_j,\theta_k))}
\end{align}
is upper-bounded by $0$. The contour plot at the left of Figure~\ref{fig:plots} shows the values of the quantity $\max_{\phi\in [0,\pi]}{\lambda(\theta_j,\theta_k,\cos{\phi})}$ for all pairs $(\theta_j,\theta_k)\in [0,\pi]^2$. The values in the area defined by the dashed curves are those computed using (\ref{eq:nullify-deriv}); the remaining values are the maximum between $\lambda(\theta_j,\theta_k,1)$ and $\lambda(\theta_j,\theta_k,-1)$. In this way, we verify that (\ref{eq:to-verify-numerically}) is true with the inequality being strict everywhere besides the single point $(\pi,\pi)$. Recall that the formal analysis above has verified that the pair $(\pi,\pi)$ is the only point at which inequality (\ref{eq:xor-upper-to-prove}) is tight when vectors $v_j$, $v_k$, and $\widehat{v}$ lie on the same $2$-dimensional plane. The plot at the right of Figure~\ref{fig:plots} depicts the value of $\lambda(\theta_j,\theta_k,z_2(\theta_j,\theta_k))$ for $\theta_j=\theta_k=\theta$ being very close to $\pi$. 
\begin{figure}[ht]
     \centering
     \begin{subfigure}[b]{0.55\textwidth}
         \centering
         \includegraphics[width=\textwidth]{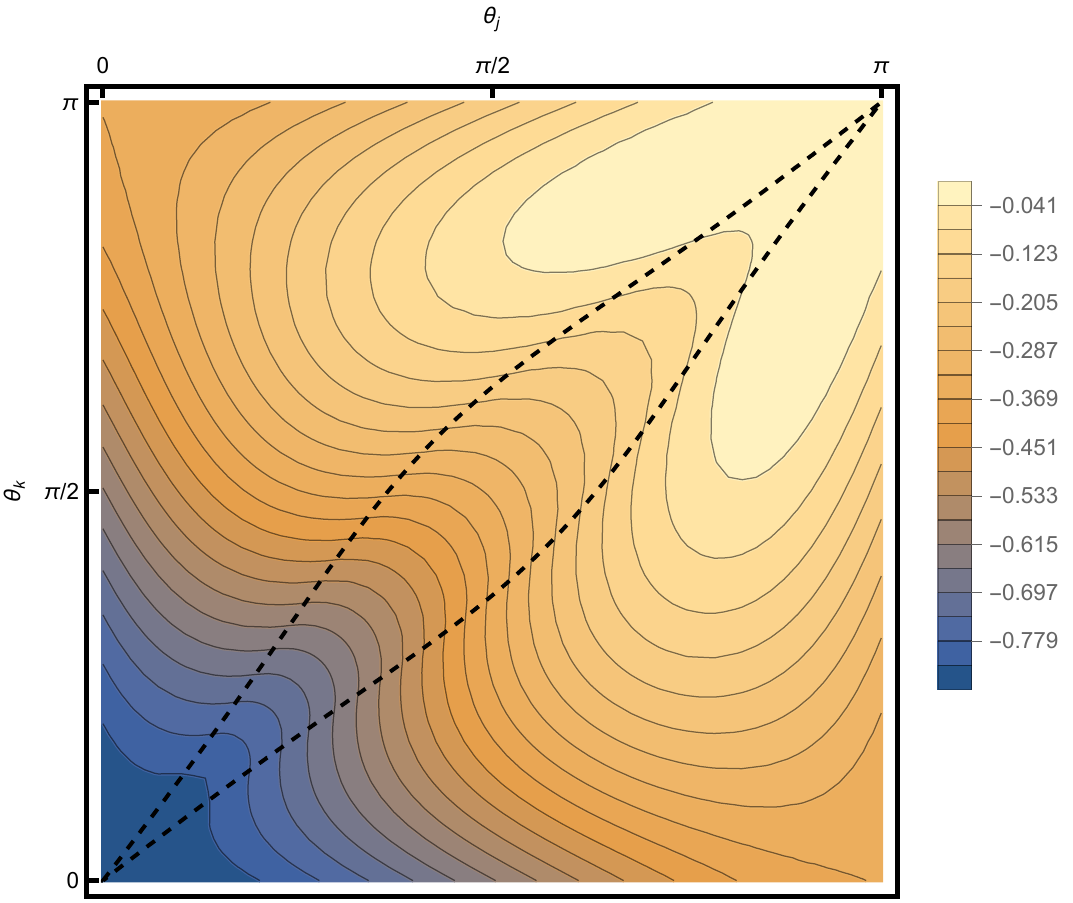}
          \end{subfigure}
     \begin{subfigure}[b]{0.44\textwidth}
         \centering
         \includegraphics[width=\textwidth]{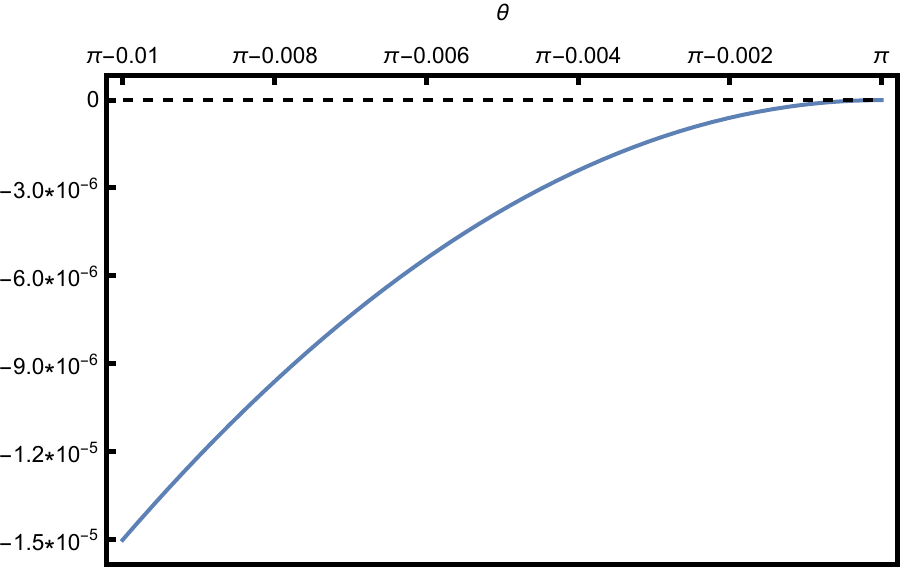}
          \end{subfigure}
    \caption{Numerical evaluation of the quantity $\max_{\phi\in [0,\pi]}{\lambda(\theta_j,\theta_k,\cos{\phi}})$ for all pairs $(\theta_j,\theta_k)\in [0,\pi]^2$ (left) and pairs $(\theta_j,\theta_k)$ with $\theta_j=\theta_k=\theta\in [0,\pi]$ (right).}
    \label{fig:plots}
\end{figure}

\section{Extensions and open problems}\label{sec:open}
We have shown how to compute $2.7371$-approximate pure Nash equilibria in cut games in polynomial time, providing the first improvement to the $3$-approximation bound of Bhalgat et al.~\cite{BCK10}. With a few minor modifications, our algorithm works for party affiliation games as well. These are generalizations of cut games, in which an edge can be either {\em enemy} or {\em friend}, giving utility to the players controlling its endpoints when they lie on different sides or the same side of the cut, respectively. 

As future work, it would be interesting to explore whether other rotation functions, possibly combined with more sophisticated rounding techniques (like those discussed in~\cite{LLZ02}), can give further improved results. How close to almost exact equilibria can we go? Even though we have make progress on this question here, the gap is still large. Exploring the inapproximability of equilibria for cut games (possibly, by strengthening the PLS-hardness results for local MAX-CUT in bounded-degree graphs; see, e.g.,~\cite{ET11}) is another problem that deserves investigation.

\bibliographystyle{plain}
\bibliography{pne}

\newpage\appendix
\section{Omitted proofs}
\paragraph{Proof of Claim~\ref{claim:subgame-potential}.}
Consider a subgame of $\game$ defined by a subset $R\subseteq \N$ of players and let $S$ and $(S_{-j},s'_j)$ be two states of the subgame differing in the strategies of player $j\in R$. By definition, we have
\begin{align*}
    \Phi_R(S)-\Phi_R(S_{-j},s'_j) &=\wgt(\cut_R(S))-\wgt(\cut_R(S_{-j},s'_j))\\
    &=\wgt(\cut_j(S))-\wgt(\cut_j(S_{-j},s'_j))=u_j(S)-u_j(S_{-j},s'_j),
\end{align*}
as desired.
\qed

\paragraph{Proof of Claim~\ref{claim:potential-increase-by-p-move}.}
Assume that player $j$ performs a $p$-move, reaching the state $(S_{-j},s'_j)$ from $S$, after changing her strategy from $s_j$ to $s'_j$. Then, the increase in player $j$'s utility during the $p$-move is
\begin{align*}
    u_j(S_{-j},s'_j)-u_j(S) &> (p-1)\cdot u_j(S) = \frac{p-1}{2} \cdot u_j(S) +\frac{p-1}{2}\cdot (U_j-u_j(S_{-j},s'_j))\\ 
    &= \frac{p-1}{2} \cdot U_j - \frac{p-1}{2}\cdot (u_j(S_{-j},s'_j)-u_j(S)),
\end{align*}
i.e., $u_j(S_{-j},s'_j)-u_j(S) \geq \frac{p-1}{p+1}\cdot U_j$. Since, by Claim~\ref{claim:subgame-potential}, $\Phi_R$ is an exact potential for any subset of players $R$ that contains player $j$, we have that $\Phi_R(S_{-j},s'_j)-\Phi_R(S)\geq \frac{p-1}{p+1}\cdot U_j$ as well.

Now assume that player $j$ has no $p$-move to make at state $S$. We have
\begin{align*}
    U_j &= u_j(S)+u_j(S_{-j},s'_j) \leq (p+1)\cdot u_j(S).
\end{align*}
The claim follows. \qed

\paragraph{Proof of Claim~\ref{claim:small_block}.}
Indeed, observe that the edges in sets $C_i$, $D_i$, $F_i$, $J_i$, and $K_i$ have at least one of their endpoint corresponding to a player in block $B_{i+1}$. The inequality follows since the total weight in edges incident to (the at most $n$) nodes corresponding to players in block $B_{i+1}$ is at most $W_{i+1}$.
\qed

\paragraph{Proof of Claim~\ref{claim:stretch}.}
By the definition of routine \GM(), in the last execution of routine \SDP() in phase $i-1$, the objective value of the semidefinite program returned is non-positive. Then, by applying Lemma~\ref{lem:sdp-stretch} for the set of players $R_i\cap B_i$, we get
\begin{align}\label{eq:stretch-eq1}
     \Phi_{R_i\cap B_i}(S^{i-1}_{-R_i\cap B_i}, S^i_{R_i\cap B_i}) &\leq \sigma \cdot \Phi_{R_i\cap B_i}(S^{i-1}).
\end{align}
By the definition of the edge sets, notice that $\cut_{R_i\cap B_i}(S^{i-1})=A_i\cup D_i\cup H_i$ and, hence, 
\begin{align}\label{eq:stretch-eq2}
\Phi_{R_i\cap B_i}(S^{i-1}) &=\wgt(A_i)+\wgt(D_i)+\wgt(H_i).
\end{align}
Furthermore, $\cut_{R_i\cap B_i}(S^{i-1}_{-R_i\cap B_i},S^i_{R_i\cap B_i})=A_i\cup F_i\cup I_i$ and, hence, \begin{align}\label{eq:stretch-eq3}
\Phi_{R_i\cap B_i}(S^{i-1}_{-R_i\cap B_i}, S^i_{R_i\cap B_i})&=\wgt(A_i)+\wgt(F_i)+\wgt(I_i).
\end{align}
The claim follows after using equations (\ref{eq:stretch-eq2}) and (\ref{eq:stretch-eq3}) to substitute the potential values in (\ref{eq:stretch-eq1}), and rearranging.
\qed

\paragraph{Proof of Claim~\ref{claim:p-move}.}
Consider player $j\in R_i\cap B_i$ and let $L_j$ be the set of edges incident to node $j$ that belong to the cut after the last $\tau$-move of player $j$ in phase $i$. The increase in the utility of player $j$ after this $\tau$-move is higher than $(1-1/\tau)\cdot \wgt(L_j)$. Now, since $\Phi_{R_i}$ is an exact potential (by Claim~\ref{claim:subgame-potential}), the increase of the potential value due to the last $\tau$-move of player $j$ is equal to the increase of her utility. Thus, the total increase of the potential between states $S^{i-1}$ and $S^i$ is higher than the total utility increase in the last $\tau$-moves of the players in $R_i\cap B_i$, i.e.,
\begin{align}\label{eq:p-move-eq1}
    \Phi_{R_i}(S^i) - \Phi_{R_i}(S^{i-1}) &> \sum_{j\in R_i\cap B_i}{(1-1/\tau)\cdot \wgt(L_j)} \geq (1-1/\tau)\cdot \wgt\left(\cup_{j\in R_i\cap B_i}{L_j}\right).
\end{align}
Now, notice that $\cut_{R_i}(S^{i})=A_i\cup C_i\cup D_i\cup I_i\cup K_i$, $\cut_{R_i}(S^{i-1})=A_i\cup C_i\cup D_i\cup H_i\cup J_i$, and $\cup_{j\in R_i\cap B_i}{L_j}\supseteq \cut_{R_i\cap B_i}(S^i)=A_i\cup D_i \cup I_i$. Hence,
\begin{align}\label{eq:p-move-eq2}
    \Phi_{R_i}(S^i) &= \wgt(A_i)+\wgt(C_i)+\wgt(D_i)+\wgt(I_i)+\wgt(K_i),\\\label{eq:p-move-eq3}
    \Phi_{R_i}(S^{i-1}) &= \wgt(A_i)+\wgt(C_i)+\wgt(D_i)+\wgt(H_i)+\wgt(J_i),
\end{align} 
and
\begin{align}\label{eq:p-move-eq4}
    \wgt\left(\cup_{j\in R_i\cap B_i}{L_j}\right) &\geq \wgt(A_i)+\wgt(D_i)+\wgt(I_i).
\end{align}
Using equations (\ref{eq:p-move-eq2}), (\ref{eq:p-move-eq3}), and (\ref{eq:p-move-eq4}) to substitute the potential values and $\wgt\left(\cup_{j\in R_i\cap B_i}{L_j}\right)$ in (\ref{eq:p-move-eq1}), we get
\begin{align*}
    \wgt(I_i)+\wgt(K_i)-\wgt(H_i)-\wgt(J_i) &> (1-1/\tau)\cdot (\wgt(A_i)+\wgt(D_i)+\wgt(I_i)).
\end{align*}
The claim follows by multiplying both sides of this last inequality by $\tau$ and rearranging.
\qed

\end{document}